\newtheorem{thm}{Theorem}[section]
\newtheorem{lem}[thm]{Lemma}
\numberwithin{equation}{section}
\newcommand{\vect}[1]{\ensuremath{\bm #1}}
\renewcommand{\u}{\vect{u}}
\newcommand{\R}{\mathcal{R}}
\newcommand{\RR}{\mathbb{R}}
\renewcommand{\L}{\mathcal{L}}
\newcommand{\Z}{\mathbb{Z}}
\newcommand {\E} { {\mathbb E} }
\newcommand {\EE} { \mathcal{E} } 
\newcommand{\eps}{\varepsilon}
\newcommand{\B}{\mathcal B}
\newcommand{\BB}{\mathbb{B}}
\newcommand{\amu}{\bar\mu}
\newcommand{\dmu}{\Delta\mu}
\newcommand{\anu}{\bar\nu}
\newcommand{\axi}{\bar\xi}
\newcommand{\azeta}{\bar\zeta}
\DeclareMathOperator{\Id}{{\rm Id}}
\DeclareMathOperator{\var}{{\rm var}}
\DeclareMathOperator{\prob}{{Pr}}
\DeclareMathOperator{\binomial}{Binomial}
\DeclareMathOperator{\erfc}{erfc}
\begin{document}


\title{Noise reduction in coarse bifurcation analysis o1f stochastic agent-based models: an example of consumer lock-in}

\author{
Daniele Avitabile
\thanks{ 
School of Mathematical Sciences, University of Nottingham, Nottingham, NG2 7RD, UK }
\and 
Rebecca Hoyle
\thanks{ 
Department of Mathematics, University of Surrey, Guildford, Surrey GU2 7XH, UK}
\and 
Giovanni Samaey
\thanks{ 
Department of Computer Science, K.U. Leuven, Celestijnenlaan 200A, 3001 Leuven,
Belgium}
}

\maketitle

\begin{abstract}
  We investigate coarse equilibrium states of a fine-scale, stochastic agent-based
  model of consumer lock-in in a duopolistic market. In the model, agents decide on
  their next purchase based on a combination of their personal preference and their
  neighbours' opinions. For agents with independent identically-distributed
  parameters and all-to-all coupling, we derive
  an analytic approximate coarse evolution-map for the expected average purchase.
  We then study the emergence of coarse fronts 
  when the agents are split into two factions with opposite
  preferences.
  We develop a novel Newton-Krylov method that is able to compute accurately and
  efficiently coarse fixed points when the underlying fine-scale dynamics is
  stochastic. The main novelty of the algorithm is in the elimination of the noise
  that is generated when estimating Jacobian-vector products using time-integration of
  perturbed initial conditions.  We present numerical results that demonstrate the
  convergence properties of the numerical method, and use the method to show that
  macroscopic fronts in this model destabilise at a coarse symmetry-breaking
  bifurcation. 
\end{abstract}

\section{Introduction}

Understanding how social groups reach general agreement or perform a coordinated
task has been the subject of an intense research effort over the past fifty years
\cite{Castellano2009a}. In models of social behaviour, consensus is a
macroscopic feature emerging from random reciprocal interactions between a
large number of heterogeneous actors. Understanding how consensus arises and
identifying the key factors for its generation or inhibition are fundamental
questions in social dynamics.

A large class of social models, known as \textit{sociophysical
models}~\cite{Galam1982q}, is based on an analogy
with ferromagnetism: social attributes such as opinions or preferences then correspond to
magnetic dipole moments of atomic spins and choices are
influenced by interactions with neighbouring spins; consensus is then represented by a
phase transition~\cite{Callen1974r,Cipra1987t,Stanley1987q} that is
studied with the tools of statistical mechanics. Sociophysical
models have been applied in various social contexts to study for
instance segregation (Ising and Schelling
models)~\cite{Schelling2006a,Gauvin2009a,Stauffer2007a}, opinion
formation~\cite{Holley1975n,Sznajd-Weron2005n} and social
impact~\cite{Lewenstein1992a}. We refer the reader to the reviews by
Schweitzer~\cite{Schweitzer2002a}, Castellano~\cite{Castellano2009a} and
Chakrabarti and co-editors~\cite{Chakrabarti2007a} for further examples.

In statistical mechanical models, macroscopic coherent states emerge from the
interaction of a large number of identical particles whose behaviour
obeys well-known physical principles. However, particle-like descriptions
of social actors may be seen as simplistic, as, in social systems, individuals do not
behave according to precise physical laws: collective behaviour is the result of the
interaction between complex heterogeneous entities which often take unpredictable
decisions. An alternative strategy is to use agent based models (ABMs)
\cite{Epstein1996y,Axelrod1997a,Gilbert2008a}. ABMs provide a
bottom-up approach to social modeling, in that they focus directly on
individual actors. In ABMs, modelers prescribe detailed rules for agents'
behaviour, possibly including heterogeneities, stochasticity, memory effects and bounded
rationality. Agents exchange information with each other and influence (and are
influenced by) their environment, which may be a model of a physical space or
a network. Because of these characteristics, ABMs have become a popular tool in
social sciences, with applications including crowd dynamics
\cite{Helbing2000a,Moussaid2010a}, civil violence~\cite{Epstein2002a},
urban crime~\cite{Short2010a}, opinion dynamics
\cite{Epstein2001a,Hegselmann2002k,Lorenz2007j,Lorenz2008r} and social networks
\cite{Marsili2004a}. They have also been used to model biological systems
\cite{Graner1992a,Glazier1993a,Troisi2005a,Barnes2010k}. In addition, several ABM
libraries and software packages are available (see \cite{Nikolai2009j} for a review). 

Even though ABMs allow a great level of granularity, it is often interesting to
extract macroscopic variables from the system, study their asymptotic behaviour and
explore their dependence upon control parameters. \textit{Sociodynamical
models}, pioneered by Weidlich \cite{Weidlich1971x,Weidlich2000y}, are obtained
by choosing appropriate coarse variables for the system under
consideration and deriving master equations for the time evolution of their
probability distributions; assuming that the distributions are unimodal and
sharply peaked, an approximate closed nonlinear model for the first few distribution
moments is then derived and analysed with tools from dynamical systems theory. For a
detailed review of techniques and applications of sociodynamics, we refer the
reader to a recent book by Helbing \cite{Helbing2011r}. In general, however, the
induced closure approximations may either be insufficiently accurate, or (in more
complicated situations) impossible to perform, resulting in evolution equations
for coarse variables that are hard or impossible to derive. In those cases, parameter
variations are typically explored via brute-force Monte Carlo simulations, which 
give access only to \textit{stable} asymptotic states and may require long transient
simulations~\cite{Sznajd-Weron2000a,During2009a,Aletti2010a,Sznajd-Weron2011a}.

The past decade has seen a growing interest in the development and deployment of
computational methods that aim at accelerating multiple-scale simulations using
on-the-fly numerical closure approximations.
We mention here equation-free~\cite{Kevrekidis2003r,Kevrekidis2009q} and
heterogeneous multiscale methods~\cite{EEng03,E:2007p3747}. Equation-free methods, in
particular, are an effective tool to bridge
between the microscopic descriptions of sociophysical models or ABMs and the
macroscopic viewpoint of sociodynamical models, since they not only allow for
accelerated simulation at the macroscopic level, but also
enable system-level tasks, such as macroscopic bifurcation analysis. In the
equation-free framework \cite{Kevrekidis2003r,Kevrekidis2009q},
one assumes the existence of a closed macroscopic model in terms of a few
macroscopic state variables. However, instead of deriving an approximate
macroscopic model analytically, one constructs a computational superstructure,
wrapped around a microscopic simulation. In this context, a key tool is the
\textit{coarse time-stepper}, which implements a time step of a macroscopic model that is not available in closed form as a three-step procedure: (i) \textit{lifting},
that is, the creation of initial conditions for the microscopic model,
conditioned upon the macroscopic state at a given time $t$; (ii)
\textit{simulation}, using the microscopic model over a time interval
$[t,t+T]$; and (iii) \textit{restriction}, that is, the estimation of
the macroscopic state at $t+T$. 

While equation-free methods have been employed in various
contexts~\cite{Cisternas2004a,Erban2006a,Laing2006a,Erban2007a,Kolpas2007a,Kolpas2008a,
Laing2010a,Tsoumanis2010a,Spiliotis2011a,Corradi2012a,Hoyle2012a,Marschler2013a},
several numerical issues remain, mainly related to the stochastic nature of the
microscopic evolution.
In the present paper, we focus on some of these numerical aspects
while performing a coarse-grained bifurcation analysis of a stochastic ABM for
opinion formation.  In particular, the computation of macroscopic steady states
requires the solution of a nonlinear system of algebraic equations, which is usually
carried out via Newton-Krylov solvers built around the coarse time-stepper. If the
underlying microscopic evolution equation is stochastic, numerical noise can severely
affect Jacobian evaluations, representing a serious obstacle to the convergence of
the nonlinear iterations~\cite{Hoyle2012a}. 

The present paper deals with the numerical computation of macroscopic coherent
structures for a model of \textit{vendor lock-in}. Lock-in is achieved when
customers repeatedly purchase the same product, irrespective of its quality, because
choosing an alternative vendor is inconvenient or impossible. The term was originally
used to explain the emergence of technological standards, with classic examples
being the prevalence of VHS over Betamax videocassette recorders and of QWERTY
over Dvorak layouts for computer
keyboards~\cite{David1985a,Arthur1989a,Arthur1990a,Janssen1999y}.
The starting point of our investigation is an ABM of vendor lock-in for
duopolistic markets~\cite{Garlick2010k}: Garlick and Chli proposed this model
in order to study, via direct numerical simulations, how to break lock-in. We extend
their model so as to include stochastic dynamics and 
heterogeneities in the agents' preferences
and perform a coarse numerical bifurcation analysis of two types
of macroscopic steady states: a global locked-in state, where the entire agent
population polarizes homogeneously, and 
fronts, which arise when two factions of agents have conflicting preferences.

The present paper thus contains two main contributions. First, for the specific
system under study, we explain the birth of the above-described macroscopic states in
terms of coarse symmetry-breaking bifurcations. To the best of our knowledge, steps
in this direction were taken only very recently~\cite{Spiliotis2012a,Borck2012q} and
were confined to globally locked-in states. In the homogeneous case, we
follow~\cite{Barkley2006y} and interpret metastable locked-in states as fixed points
of a coarse evolution map. In the limit of infinitely many globally-coupled agents
with homogeneous product preferences, we derive the coarse evolution map
analytically. In the case of heterogeneous agents we employ stochastic continuation
and show for the first time how fronts destabilise to 
partially locked-in states. 

The second main contribution of the paper is the development of a novel procedure to
obtain coarse Jacobian-vector products with reduced variance, allowing the accurate
evaluation of Jacobian-vector products in the presence of microscopic stochasticity,
thus gaining full control over the linear and the nonlinear iterations of the
Newton-Krylov solver. Even though our implementation of variance-reduced
Jacobian-vector products is specific to the lock-in model, we believe that analogous
strategies can be applied to other ABMs.  Therefore, we provide a detailed account of
the algorithmic steps involved in defining an accurate equation-free Newton-Krylov
method and testing its convergence properties.

The paper is organised as follows: Section~\ref{sec:model} contains the description
of the lock-in model and a preliminary simulation-based study of coarse macroscopic
states; in Section~\ref{sec:homStates} we derive an approximate analytic coarse map
for the case of homogeneous agents; in Section~\ref{sec:eq-free} we describe the
macroscopic time-stepper for the lock-in model and introduce weighted lifting
operators to obtain variance-reduced Jacobian-vector products; in
Section~\ref{sec:numericalProperties} we test numerical properties of the
Newton-GMRES solver; in Section~\ref{sec:bifurcationResults} we present the results
of the coarse bifurcation analysis and we conclude in Section~\ref{sec:conclusions}.

\section{An ABM for consumer lock-in}\label{sec:model}
\subsection{Model description}

In this section, we introduce a generalization of a consumer lock-in ABM
proposed by Garlick and Chli~\cite{Garlick2010k}, which, in our investigation, 
will serve as a prototypical ABM with heterogeneous agents and binary state
variables.

Let us consider a set of $N$ agents on a two-dimensional square lattice spanning
$[-1,1]^2$, in which the agents are placed on evenly spaced points
$(x_i,y_j)_{i,j=1,1}^{I,J}$, with $x_i=-1+i\Delta x$ and $y_j=-1+j\Delta y$, such
that $x_I=y_J=1$ and $IJ = N$. For notational convenience, we use a lexicographic
numbering of the agents, which are identified by a single index $n$ running from $1$
to $N$. The position of the $n$th agent on the lattice is then denoted by $\vect{r}_n
= (x_n,y_n) = (x_i,y_j)$ with $i = n \mod I $ and $n = (j-1)I+i$.
 
At each discrete time step $t$, agents choose simultaneously between two
products, labelled $0$ and $1$, so that the associated state variables $u_n(t)$
are collected in a vector $\u(t) \in \BB^N$, where $\BB=\set{0,1}$.
Agents are coupled via their neighbourhoods $\square_n$, comprising $\vert
\square_n \vert$ other agents, and their choices are determined by two
parameters: the perceived relative quality $q_n$ of both products, and each agent's
tendency to follow its neighbourhood, $\lambda_n$.  If $q_n \approx -1$, then the
$n$th agent has an intrinsic preference for product $0$ over product $1$ (and the
opposite is true if $q_n \approx 1$). On the other hand, a value $\lambda_n \approx
0$ indicates that the $n$th agent disregards the opinion of its neighbours, whereas
$\lambda_n \approx 1$ implies that the agent aligns itself with the majority of the
neighbours. While these parameters remain constant at all times, each agent draws
its values from an approximate normal distribution, whose moments may depend upon the
position
$\vect{r}_n$,
\begin{equation}
  q_n       \sim \mathcal{N} \big(q; \mu(\vect{r}_n),\xi(\vect{r}_n) \big), \qquad
  \lambda_n \sim \mathcal{N} \big(\lambda; \nu(\vect{r}_n),\zeta(\vect{r}_n) \big), \qquad
  n = 1,\ldots,N.
  \label{eq:parDistr}
\end{equation} 
In practice we require some constraints on $q_n$ and $\lambda_n$, namely $q_n \in
[-1,1]$ and $\lambda_n \in [0,1]$: the normal distributions are chosen such that
this occurs with very high probability; otherwise, the values of $q_n$ and $\lambda_n$ are
discarded and a new random value is generated.  We point out that the issue of
negative parameter values could also be avoided by prescribing distributions which
are naturally defined on finite intervals (for instance, the Beta distribution); we
have chosen the distributions as in the original model by Garlick and
Chli~\cite{Garlick2010k}.

Agent diversity is therefore modelled in two ways: $q_n$ and $\lambda_n$ are randomly
generated and the corresponding probability distributions may vary along the lattice.
In the present paper, we will choose
\begin{equation}
\begin{aligned}
  \mu(\vect{r}_n) &:= \mu(x_n) = \amu + \dmu \tanh( \alpha x_n ), \quad 
  & \xi(\vect{r}_n) = \axi, \\
  \nu(\vect{r}_n) & = \anu,\quad
  & \zeta(\vect{r}_n) = \azeta,
\end{aligned}
\label{eq:parMoments}
\end{equation}
for $n = 1,\ldots,N$, $\amu, \alpha \in \RR$ and $\dmu,\axi,\anu,\azeta \in \RR^+$.
Note that with the above choice, we only introduce a one-dimensional 
parametrization of the mean preferences.
More general multi-dimensional parametrizations of preferences and agent's
tendency to follow their neighbourhood are conceivable, but will not be
considered in this paper. 
As we shall see in the following sections, the sigmoid
$\mu(x_n)$ allows us to model the existence of factions with strong preferences for
one product.

At each time step, agents simultaneously inspect their neighbourhoods and compute two
utility functions, associated with products $0$ and $1$, that represent a weighted
average between their intrinsic preference and the choice of their neighbours,
\begin{equation}
  \begin{aligned}
    f^0_n\big(\u(t)\big) & = - (1-\lambda_n)\frac{q_n}{2} + \lambda_n \Bigg[ 1 -
    \frac{1}{\vert \square_n \vert}\sum_{n' \in \square_n} u_{n'}(t) \Bigg], \\
    f^1_n\big(\u(t) \big) & = (1-\lambda_n)\frac{q_n}{2} + \frac{\lambda_n}{\vert \square_n \vert}\sum_{n' \in \square_n } u_{n'}(t).
  \end{aligned}
  \label{eq:util}
\end{equation}
Once the utility functions have been computed, each agent selects a product at
time $t+1$ according to a Bernoulli distribution whose mean depends upon
the difference between the utility functions at time $t$. More precisely, let 
\[
\Delta f_n\big(\u(t)\big) = f^1_n\big(\u(t)\big) - f^0_n\big(\u(t)\big),
\]
then the $n$th agent's choice is determined via the following conditional
distribution
\begin{equation}
\begin{aligned}
  p( u_n(t+1) & = 1 | \vect{u(t)} = \vect{v} ) = 
  \frac{\exp\big[\beta \Delta f_n (\vect{v}(t))\big]}{
  \exp[-\beta \Delta f_n (\vect{v}(t))]+\exp\big[\beta \Delta f_n (\vect{v}(t))\big]
 }, \\
  p( u_n(t+1) & = 0 | \vect{u(t)} = \vect{v} ) = 
  \frac{\exp\big[-\beta \Delta f_n (\vect{v}(t))\big]}{
  \exp[-\beta \Delta f_n (\vect{v}(t))]+\exp\big[\beta \Delta f_n (\vect{v}(t))\big]
 },
\end{aligned}
\label{eq:p_n}
\end{equation}
for $n = 1,\ldots,N$ and $\beta \in \RR$.
The evolution of the system is best understood by inspecting the utility
functions \eqref{eq:util}. The function $f^1_n$, for instance, is formed by two
contributions: the first addend pertains to the perceived quality of product
$1$; the second addend accounts for the neighbourhood's influence, since this
term is proportional to the number of purchases of product $1$ in the
neighbourhood. The relative importance of the two contributions is determined by
the parameter $\lambda_n$. 

The agent-based model described by \eqref{eq:util}--\eqref{eq:p_n}, completed
by initial conditions and explicit expressions for means and standard
deviations in~\eqref{eq:parDistr}, defines an evolution equation that we will
formally denote by
\begin{equation}
  \begin{aligned}
    & \u(t+1) = \vect{\varphi}( \u(t); \vect{\gamma},\vect{\omega}^n ), \quad t \in \Z_+, \quad \u \in \BB^N,
    \quad \vect{\gamma} \in \RR^Q, \\
    & \u(0) = \u_0,
  \end{aligned}
  \label{eq:phiMap}
\end{equation}
where we have collected in $\vect{\gamma}$ the following microscopic parameters
\begin{equation}
  \vect{\gamma} = (q_1,\ldots,q_n,\lambda_1,\ldots,\lambda_n, \beta),
  \label{eq:microParList}
\end{equation}
and $\vect{\omega}^n$ denotes the set of random choices that were made by the agents
during this time step.

Henceforth, we will refer to~\eqref{eq:phiMap} as the \textit{lock-in model},
implying that the agents behave as specified in \eqref{eq:parDistr}--\eqref{eq:p_n}.
Unless otherwise stated, we shall assume all-to-all coupling, that is, $\square_n =
\set{1,\ldots,N}$ for all $n$, and random Bernoulli-distributed initial conditions
with average $0.5$ 
\begin{equation}
  u_{0n} \sim \mathcal{B}(u;0.5) \qquad \text{$u_{0n}$ i.i.d. for $n=1,\ldots,N$.}
  \label{eq:iniCond}
\end{equation}
\begin{rem}[Interpretation of the coordinates $\vect{r}_n$]
  \label{rem:space}
  Since we have chosen all-to-all coupling and agents are identically coupled via their
  mean preference (see Equation~\eqref{eq:util}), it would be misleading to interpret
  $\vect{r}_n=(x_n,y_n)$ as a location in physical space: with this type of
  coupling the spatial position of the agents does not play any role in the
  evolution of the system. However, from
  Equations~\eqref{eq:parDistr}--\eqref{eq:parMoments} we see that $x_n$ is used
  to order agents by their mean preference, via the sigmoidal function
  $\mu(x_n)$. Such ordering is of course arbitrary, but it allows us to make a
  distinction between two cases: homogeneous agents, when the distribution of
  the quality parameter $q_n$ is the same for all agents ($\Delta \mu = 0$), and
  heterogeneous agents, when the average perceived quality varies within the
  population. In the reminder of the paper, agents will be presented on a
  two-dimensional lattice only for visualisation purposes and the reader should
  interpret $x_n$ as a position in preference space, not physical space. Such
  preference space is parametrised by a the single coordinate $x_n$, as $y_n$
  does not play a role in our simulations.
\end{rem}
\begin{rem}[Deterministic lock-in model]
  The evolution of the lock-in model is stochastic, since
  agents' choices are determined via~\eqref{eq:p_n}.
  However, it is possible to study a deterministic evolution by considering the
  limit $\beta \to \infty$. In this case, agents purchase their
  product according to
  \begin{equation}
    u_n(t+1) = 
    \begin{cases}
      u_n(t) & \text{\textnormal{if $ f^0_n\big(\u(t) \big) = f^1_n\big(\u(t) \big) $,}} \\
      0 & \text{\textnormal{if $ f^0_n\big(\u(t) \big) > f^1_n\big(\u(t) \big) $,}} \\
      1 & \text{\textnormal{if $ f^0_n\big(\u(t) \big) < f^1_n\big(\u(t) \big) $,}}
    \end{cases}\qquad
    n = 1,\ldots,N.
    \label{eq:un_det}
  \end{equation}
  In passing we note that, in the limit $\beta \to \infty$, $\Delta f_n \to 0$,
  Equations~\eqref{eq:p_n}, give a random choice and assign equal probability to $0$
  and $1$; to make the model deterministic, we then prescribe that if
  $f_n^0(\vect{u}(t))=f_n^1(\vect{u}(t))$ the agent sticks with its previous
  decision.
Then, $u_n(t+1)$ is a deterministic function of $\vect{u}(t)$ and of the
  agent's parameters $\lambda_n$, $q_n$. Even in that case, $\vect{u}(t)$ remains
  a random variable, since $\lambda_n$, $q_n$ and the initial condition
  $\vect{u}_0$ are randomly distributed according to
  \eqref{eq:parDistr} and \eqref{eq:iniCond}, respectively. Our model additionally
  differs from the original model of Garlick and Chli~\cite{Garlick2010k} in two
  ways. First, the model
  in~\cite{Garlick2010k} is a deterministic lock-in model with
  no 
  heterogeneity in the agent's preferences,
  $\dmu = 0$. Second, we rescaled the
  utility function so as to include a single
  parameter $q_n$ for the perceived quality, as opposed to having separate
  parameters for products $1$ and $0$.
  \label{rem:deterministic}
\end{rem}
\begin{rem}[Possible model extensions]
  Different types of coupling can be considered for the agents. Beside the
  all-to-all coupling adopted in the present paper,
  nearest-neighbour~\cite{Garlick2010k} and static/dynamic small-world
  couplings~\cite{Watts1998k} are also possible. 
  Considering nearest-neighbour coupling or agent motility would
  effectively introduce a genuine spatial dependence in the system (see
  Remark~\ref{rem:space}).
  More realistic models can also be obtained if the agents adapt their
  parameters $q_n$ and $\lambda_n$ as time varies, so that they can change their
  opinion about the products or their attitude towards the neighbourhood.
\end{rem}

\subsection{Simulation-based study of the lock-in system} 
\label{sec:simulations}
\begin{table}
  \centering
  \begin{tabular}{c c c S[table-format=1.1]  S[table-format=1.4]  S[table-format=1.2]  S[table-format=1.4]  c }
    \toprule
    Experiment & {$\amu$} & {$\dmu$} & {$\alpha$} & {$\axi$} & {$\anu$} & {$\azeta$} & {$\beta$}   \\
    \midrule
             E1 &   0   &  0  &  0    & 0.236  &  0.05 &  0.0167 & 10 \\
             E2 &   0   &  0  &  0    & 0.236  &  0.5  &  0.167  & 10 \\
             E3 &   0   &  1  &  5.0  & 0.236  &  0.5  &  0.167  & 10 \\
             E4 &   0   &  1  &  0.5  & 0.236  &  0.5  &  0.167  & 10 \\
    \bottomrule
  \end{tabular}
  \caption{Parameter values for the lock-in model simulations of
  Figures~\ref{fig:homStates} and~\ref{fig:inhomStates}. Experiments are done
  with all-to-all coupling and random Bernoulli-distributed initial
  conditions~\eqref{eq:iniCond}.}
  \label{tab:params}
\end{table}
\begin{figure}
  \centering
  \includegraphics{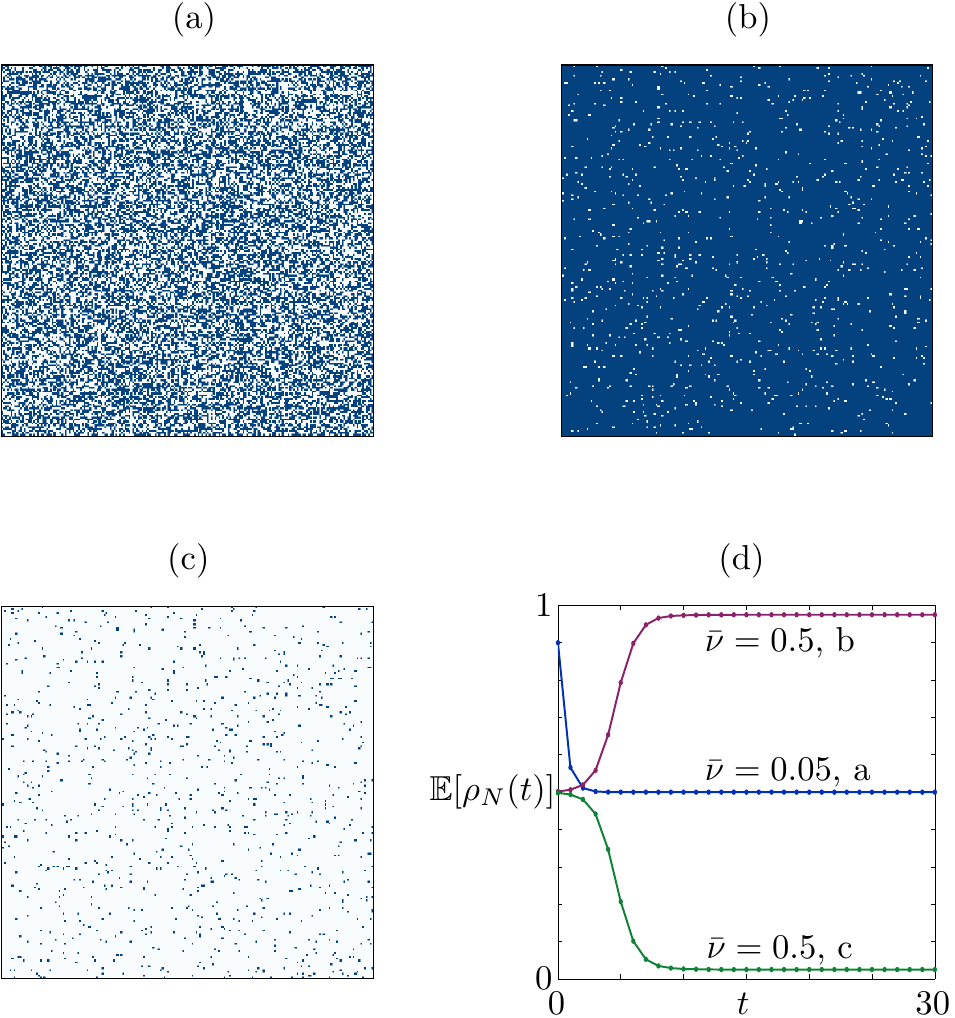}
  \caption{Homogeneous states of the lock-in model
  \eqref{eq:parDistr}--\eqref{eq:p_n} with $N = 200^2$ agents, corresponding to
  experiments E1 and E2 in Table~\ref{tab:params}. White and blue
  dots represent agents purchasing product $0$ and $1$ respectively. The lock-in
  model is initialised sampling a Bernoulli distribution~\eqref{eq:iniCond} and
  it is iterated for $30$ time steps. Panel (a): mixed state obtained with
  experiment E1. Panel (b): global lock-in of product $1$, obtained with one
  realization of E2. Panel (c): global lock-in of product $0$, obtained with a
  second realization of E2. Panel (d): ensemble average of the average purchase
  $\rho_N = \sum_n u_n/N $ over $2000$ realizations
  as a function of time for simulations of panels (a), (b) and (c); the
  experiment of panel (a) is here repeated with initial conditions $u_{0n} \sim
  \B(u;0.9)$ for $n=0,\ldots,N$, showing that the mixed state is the unique
  macroscopic stable equilibrium for $\anu = 0.05$.}
  \label{fig:homStates}
\end{figure}

We now discuss microscopic numerical simulations that motivate our choice of the
macroscopic state variables. In the following numerical experiments, we iterate the
lock-in model~\eqref{eq:phiMap} with initial condition~\eqref{eq:iniCond} for the
choices of the parameter distributions~\eqref{eq:parDistr} specified in
Table~\ref{tab:params}. This leads naturally to the introduction of a set of
macroscopic variables, which will be defined more precisely in
Section~\ref{subsec:macroscopicVariables}.

\subsubsection{Globally locked-in states with homogeneous agents}

With the first two experiments, using parameter sets E1 and E2, we find
homogeneous
macroscopic solutions corresponding to globally locked-in states. 
In Figures~\ref{fig:homStates}(a)--\ref{fig:homStates}(c) we show 1
\textit{mixed state} obtained in E1 and 2 \textit{locked-in states} obtained in
E2. In each realization of these experiments, we obtain different final states
since the evolution is stochastic and the initial condition $\u_0$ as well as the
microscopic parameters $\vect{\gamma}$ are randomly distributed. 
We recall that agents are presented on a two-dimensional lattice
for visualisation purposes, but their position $\vect{r}_n$ on the lattice does not
influence the dynamics (see Remark~\ref{rem:space}).

In experiment E1, we set the agent's parameters so that the average perceived
quality of product 0 and 1 is identical and the tendency to follow the
neighbourhood is low (see Table~\ref{tab:params} and
Figure~\ref{fig:homStates}(a)). The resulting state is a \textit{mixed
state}, with an even distribution of final products. In experiment E2, we
increase the average and variance of the coupling
(Figures~\ref{fig:homStates}(b)--\ref{fig:homStates}(c))
and observe two \textit{locked-in states} (each equally likely to occur) in which
almost all agents continually purchase one product, irrespective of its
perceived quality. Indeed, since 
$\amu = \Delta \mu = 0$, 
we expect that on average
only half of the agents have a preference for the dominant product, whereas agents in
the remaining half purchase a product that they consider worse in terms of quality.
As the experiment is repeated, we can get lock-in of either product, owing to
the stochasticity of the evolution and the randomness of microscopic parameters and initial
conditions. These results are in accordance with what was reported by
Garlick and Chli~\cite{Garlick2010k} for a deterministic lock-in model with
all-to-all coupling (see also Remark~\ref{rem:deterministic}) and reinforce the
similarity between the lock-in ABM and other Ising-type sociophysical models
available in the literature~\cite{Castellano2009a}.

It is natural to seek for a characterization of the lock-in model in terms of a
simple macroscopic variable and to interpret the statistical
equilibria obtained as steady states of a suitably-defined dynamical system. In panel
Figure~\ref{fig:homStates}(d) we begin introducing such a characterization:
we repeat $2000$ times the numerical simulations that led to each of the states
in panels (a), (b), (c), group each of the samples by their 
mean purchase
\[
\rho_N(t) = \frac{1}{N} \sum_n u_n(t) \in \mathbb{Q}_N, 
\quad t \in \mathbb{Z}_+,
\quad \textrm{where } \mathbb{Q}_N = \Set{
\frac{n}{N} \in
\mathbb{Q} | 0 \leq n \leq N}, 
\]
 and plot the ensemble average of these 
 means
as a function of time. 
The macroscopic variable $\rho_N$ is a scalar, as agent's
preferences do not depend on $\vect{r}_n$. The plot in
Figure~\ref{fig:homStates}(d) shows that, in this description,
locked-in and mixed states are achieved rapidly, within just $10$ iterations of
the map. For a low value of the
average coupling strength $\bar \nu$, the system reaches a single macroscopic state: from
panel (d) we see that, in this region of parameter space, the
mixed state is attracting even if the initial conditions are close to a fully
locked-in state, that is, $u_{0n} \sim \B(u;0.9)$, for $n=1,\ldots,N$. Upon
increasing the coupling strength, we find two new macroscopic states, suggesting the
presence of a pitchfork bifurcation at the macroscopic level.  
\begin{figure}
  \centering
  \includegraphics{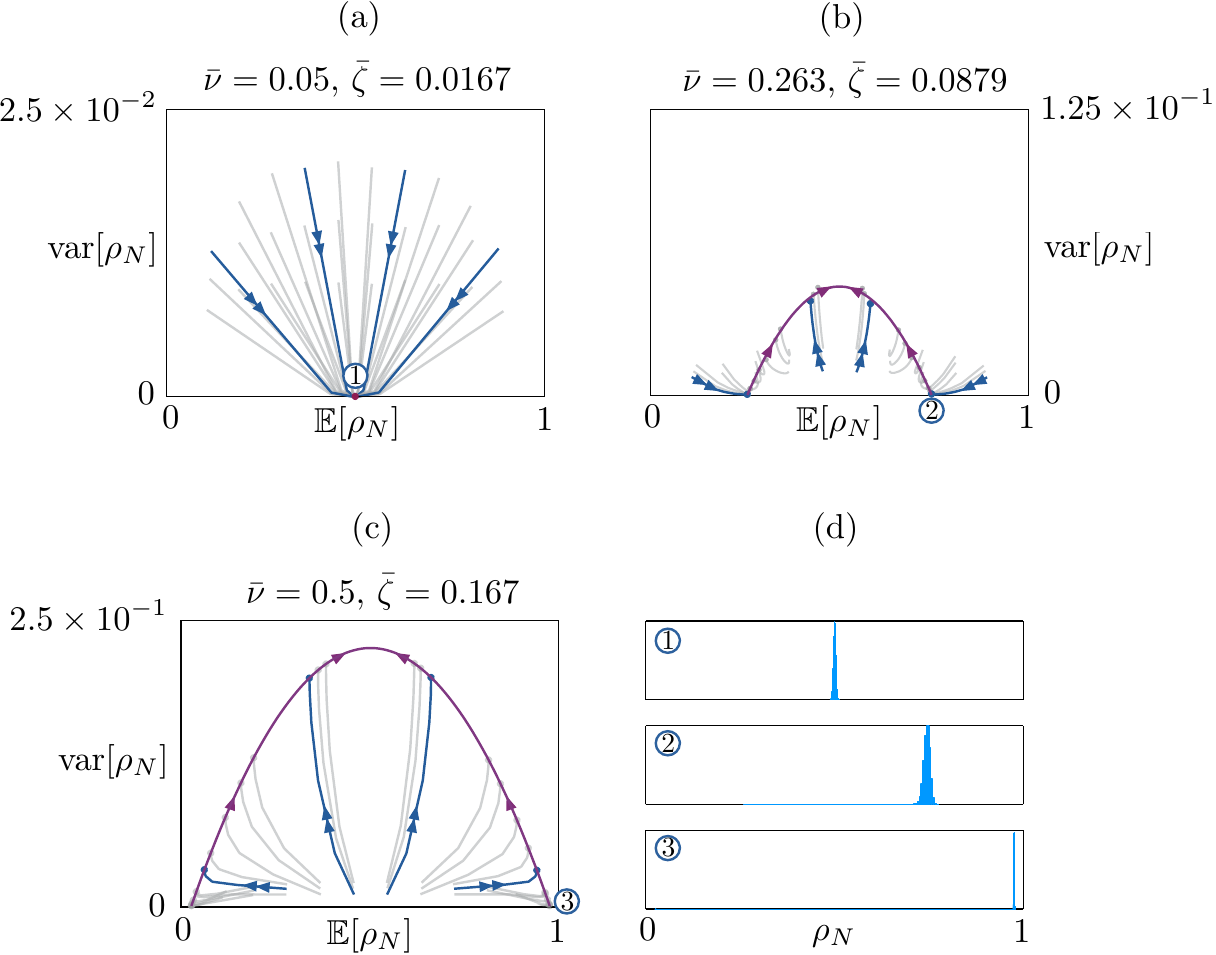}
  \caption{Expectation and variance of the average agents' choice $\rho_N(t) =
  \sum_n u_n(t)$ starting from different initial conditions. For each trajectory on
  the $(\mathbb{E}[\rho_N],\textrm{var}[\rho_N])$-plane, we initialise $M=10^5$
  independent simulations for $N=200^2$ agents (as in Figure~\ref{fig:homStates})
  with various initial conditions and iterate the lock-in model until $t=20$. Panel
  (a): low average coupling $\lambda_n$ (as in E1 of Table~\ref{tab:params}). Panel
  (b): intermediate value of the average coupling. Panel (c): high value of the
  average coupling (as in E2). Panel(d): examples of final distributions of
  $\rho_N$.}
  \label{fig:slaving}
\end{figure}
\begin{figure}
  \centering
  \includegraphics{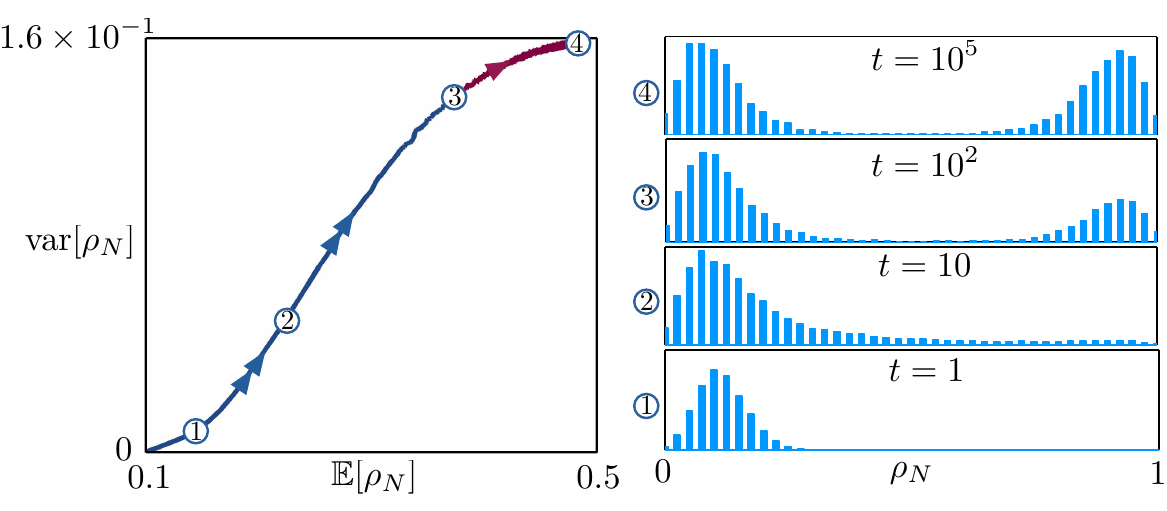}
  \caption{Expectation and variance of the average agents' choice $\rho_N(t) =
  \sum_n u_n(t)/N$ for the lock-in model with $N=40$ agents (see also the animation
  \href{run:Movies/slaving.avi}{slaving.avi}). Left: we
  initialise $M=10^4$ realizations from $u_{0n}=0.1$ for all $n$; the evolution on
  the $(\mathbb{E}[\rho_N],\textrm{var}[\rho_N])$-plane is plotted with a blue line
  for the first $10^2$ iterations and in magenta for the following iterations, until
  $t=10^5$. Right: distributions of $\rho_N$ at various times. After a short
  transient, the evolution takes place on a slow manifold. 
  Parameters: $\amu = 0$, $\Delta\mu = 0$, $\alpha=0$, $\axi=0.235$, $\anu=0.3$,
  $\azeta=0.03$, $\beta=8$.}
  \label{fig:slowManifold}
\end{figure}

However, a more careful inspection shows that these macroscopic locked-in solutions
are not stable steady states, but rather coarse metastable states:  it is indeed known
that, in sociophysical models, the lifetime of metastable
states is linked to finite system size \cite{Castellano2009a}.
In Figure~\ref{fig:slaving}, we repeat similar computations and monitor $\E[\rho_N]$
and $\var[\rho_N]$ as a function of time. This time we prepare realizations with different
initial expectation and variance, so as to plot several
orbits on the $(\E[\rho_N],\var[\rho_N])$-plane. A low average value of the
coupling parameter $\lambda_n$ leads to a single steady state, as shown in
Figure~\ref{fig:slaving}(a). For intermediate and high values of the
coupling (Figures~\ref{fig:slaving}(b) and \ref{fig:slaving}(c)), trajectories are
quickly attracted to a slow manifold (purple curve) which, for
these choices of parameters, is well approximated by a parabola. 
In equation-free terminology, the existence of a slow manifold in the
$(\E[\rho_N],\var[\rho_N])$-plane is referred to as \textit{slaving}. 

Asymptotic equilibria in Figure~\ref{fig:slaving}(a) have small variance (they correspond to
sharply peaked distributions with average equal to $0.5$) while asymptotic equilibria in
Figures~\ref{fig:slaving}(b) and \ref{fig:slaving}(c) have a much higher variance.
The latter distributions have means equal to $0.5$, but they are bimodal (as
will be shown below).

In Figure~\ref{fig:slaving}, we iterate the lock-in model only until $t=20$, a
time scale clearly suggested by the coarse solution curves of
Figure~\ref{fig:homStates}(d): in fact, for these choices of the control
parameters, the time scale of the drifting on the
slow manifold is so long that it is not feasible to observe it with numerical
computations; hence the magenta curves in
Figures~\ref{fig:slaving}(b)--\ref{fig:slaving}(c) are
obtained by fitting a parabola to the set of final points on the phase plane. In
Figure~\ref{fig:slowManifold} (and the accompanying animation
\href{run:Movies/slaving.avi}{slaving.avi}), the system size and parameters have
been adjusted to observe drifting on more affordable time scales ($N=40$, $\amu
= 0$, $\Delta\mu = 0$, $\alpha=0$, $\axi=0.235$, $\anu=0.3$, $\azeta=0.03$ and
$\beta=8$): the initial probability distribution of $\rho_N$ is a Dirac delta,
which becomes a unimodal distribution with nonzero variance on time scales of
order $t=10$ and drifts towards a bimodal distribution on time scales of order
$t=10^5$. The system therefore always evolves towards a state with
$\E[\rho_N]=1/2$. However, the difference between strong coupling and weak
coupling is clearly visible: when the agents are weakly coupled, each individual
realization of the system evolves to a mixed state with $\rho_N = 1/2$, so there
is no lock-in, whereas with strong coupling between the agents, each realization
will display lock-in and the initial condition determines which state the agents
will be locked into. Due to microscopic stochasticity, the system only
equilibrates over a very long time scale, over which a fraction of the
realizations flips to the other locked-in state in the latter case. 

As a consequence, even though these locked-in solutions are only metastable, it is still
meaningful to characterize them as fixed points of an evolution map  
on intermediate time-scales. Barkley, Kevrekidis and Stuart studied metastable
states in physical systems with similar properties and use the term \emph{moment
map} for the coarse evolution operator~\cite{Barkley2006y}.
We shall return to this moment map for homogeneous steady states in
Section~\ref{sec:homStates}, where we derive an approximate coarse evolution map
for the lock-in model. 

\subsubsection{Fronts for heterogeneous agents} We now turn to
heterogeneous states, which correspond to large-dimensional coarse maps
and are more challenging to compute with equation-free methods.
\begin{figure}
  \centering
  \includegraphics{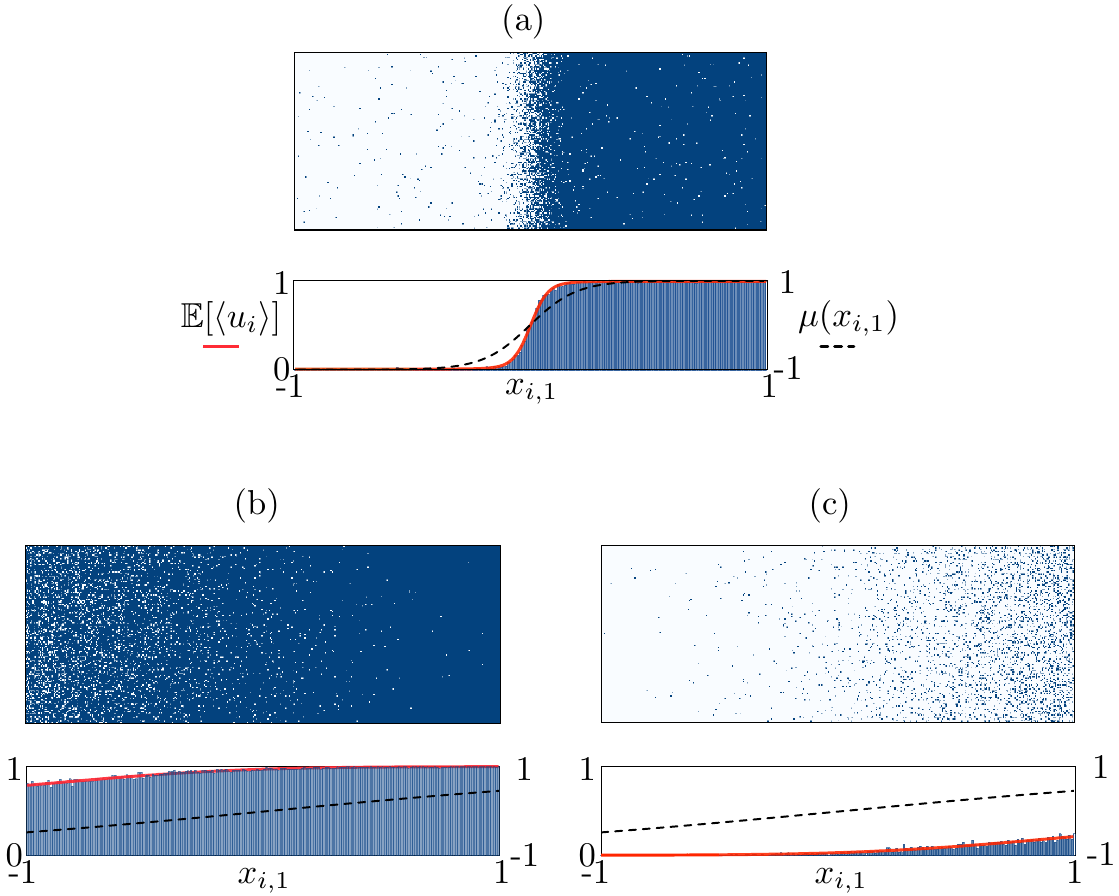}
  \caption{Heterogeneous states of the lock-in model
  \eqref{eq:parDistr}--\eqref{eq:p_n} with $N = 400 \times 150$ agents,
  corresponding to experiments E3 and E4 in Table~\ref{tab:params} (see also the
  animations 
  \href{run:Movies/stableFront.avi}{stableFront.avi}
  and
  \href{run:Movies/unstableFront.avi}{unstableFront.avi}
  ). The square $[-1,1]^2$ has been scaled to a rectangle for visualization purposes
  and we use $x_{i,j}$ to indicate the position on the lattice. The lock-in
  model is initialised sampling a Bernoulli distribution~\eqref{eq:iniCond} and
  it is iterated for $30$ time steps. Panel (a): in experiment E3 a stable
  interface is formed between two locked-in states (top); macroscopic states are
  obtained by averaging along the $y$ axis (bottom, blue histogram) and then
  taking an ensemble average with respect to $2000$ realizations (bottom, red
  curve, left axis);
  The resulting macroscopic state is a front connecting a macroscopic
  $0$-state with a macroscopic $1$-state; the front is sharper than the 
  profile $\mu$ of the average quality perception parameters $q_n$
  (bottom, dashed black curve, right axis). Panels (b) and
  (c): in experiment E4, the slope of the profile $\mu$ is varied; the
  macroscopic front loses stability, giving rise to two stable heterogeneous
  states featuring pockets of resistance.}
  \label{fig:inhomStates}
\end{figure}
More specifically, we allow the average quality perception
$\mu$ to vary within the population, as specified
in~\eqref{eq:parMoments}: in experiments E3 and E4 we choose a nonzero $\dmu$, and
vary the steepness $\alpha$ of the sigmoidal function $\mu(x_n)$. 
In E3, for instance, we choose $\amu = 0$, $\dmu = 1$ and $\alpha = 5$, so the agents
are split into two factions of the same size: the coordinate $x$ parametrises
the agents' mean preferences, so agents with negative $x$ like product $0$ and
agents with positive $x$ favour product $1$. In E4, the sigmoid $\mu(x)$ is less
steep, meaning that the two factions have still the same size, but there are fewer
zealots.


For these experiments we choose a lattice of $400 \times 150$ agents, initialise the
system with the Bernoulli distribution with success probability $0.5$ and evolve the
map for $30$ iterations. 
As in the homogeneous case, the two-dimensional lattice is used only for
visualization purposes and the position on the lattice should not be interpreted as a
physical location, but as a position in preference space (see
Remark~\ref{rem:space}).


Figure~\ref{fig:inhomStates}(a) shows the results of experiment E3, for which
$\alpha = 5$.
%
%
As expected, the inhomogeneity in the distribution of the average
quality perception parameters $q_n$ induces the formation of 
a pattern. In the accompanying animation
\href{run:Movies/stableFront.avi}{stableFront.avi} we initialise the
system using a slightly different initial condition, $u_{0n} = H(x_n)$, where $H$ is
the Heaviside function. The state represented in Figure~\ref{fig:inhomStates}(a)
is the only attracting solution in this region of parameter space.
 
We shall abandon for a moment the lexicographical
ordering used so far and denote the agents' purchases as $u_{i,j}$. To obtain a
macroscopic description of the state in Figure~\ref{fig:inhomStates}(a), we compute
averages of the purchases along the $y$-axis, $\braket{u_i}$, and then
take an ensemble average with respect to $2000$ realizations of the same
experiment (red curve in panel a). The resulting macroscopic state is a front
connecting a macroscopic $0$-state to a $1$-state: the front is
parametrised by the agent's mean preference or, equivalently, by $x$.
For convenience, we also plot $\mu(x_{i,1})$ and compare it to the macroscopic
front, noting that the final
macroscopic steady state is sharper than the  profile of $\mu$.
As we decrease $\alpha$, the macroscopic front persists and becomes flatter,
until a critical point at which two new inhomogeneous states emerge (see the
animation \href{run:Movies/unstableFront.avi}{unstableFront.avi}). Such
states, obtained with experiment E4 and shown in Figure~\ref{fig:inhomStates}(b)
and \ref{fig:inhomStates}(c), are related via the transformation 
\[
\E[\braket{u_i}] \mapsto -\E[\braket{u_{-i}}] + 1.
\]
The scenario described above suggests that, as $\alpha$ is decreased, the front
of Figure~\ref{fig:inhomStates}(a) undergoes a symmetry-breaking bifurcation at the
macroscopic level. To the best of our knowledge, this type of transition has not
been observed before in studies of opinion formation models. In the following
sections, we will give a more precise definition of the macroscopic
variables chosen to describe the lock-in systems for both homogeneous and
inhomogeneous states, and then proceed to perform a numerical bifurcation analysis of the
corresponding states.

\subsection{Macroscopic level description}
\label{subsec:macroscopicVariables}
Let us consider the lock-in model for fixed values of the microscopic parameters $\vect{\gamma}
\in \RR^Q$, which are randomly distributed via~\eqref{eq:parDistr} and remain
constant at all times\footnote{The vector $\vect{\gamma}$, as given
by~\eqref{eq:microParList}, also contains the deterministic parameter $\beta$,
which has been omitted here for simplicity.}.
Then, we denote by $\chi(\vect{u},t|\vect{\gamma})$ the probability distribution
of the vector $\vect{u}(t)$, given these microscopic parameters $\vect{\gamma}$. Now,
considering that the microscopic parameters themselves are distributed according to a
probability distribution $\psi(\vect{\gamma};\vect{\Gamma})$ that depends on a
(small) number of macroscopic parameters
$\vect{\Gamma}=(\bar{\mu},\Delta\mu,\alpha,\bar{\xi},\bar{\nu},\bar{\zeta})\in\mathbb{R}^P$,
we can define the joint
probability distribution of microscopic parameters and states as 
\[ 
p(\vect{u},\vect{\gamma},t) = \chi(\vect{u},t|\vect{\gamma}) \psi(\vect{\gamma};\vect{\Gamma}).
\]
We note that, to simplify notation, we often omit the explicit
dependence of $\psi$ on $\vect{\Gamma}$.
The probability distribution $P(\vect{u},t)$ for an average agent at time $t$ is then obtained by integrating over all possible microscopic parameter values,
\[
P(\vect{u},t) = \int_{\RR^Q}p(\vect{u},\vect{\gamma},t) \, d\vect{\gamma}.
\]
We can formally write the time evolution of $P(\vect{u},t)$ as 
\[
P(\u,t+1)= \int_{\RR^Q} \int_{\BB^{N}} \Psi(\vect{u} | \vect{v},\vect{\gamma}) 
p(\vect{v},t,\vect{\gamma})\, d\vect{v} \, d\vect{\gamma},
\]
where $\Psi(\vect{u} | \vect{v},\vect{\gamma})$ represents the \textit{transition
kernel}, that is, the probability distribution of the state at time $t+1$ given
that the system was in $\vect{v}$ at time $t$ with constant microscopic parameters
$\vect{\gamma}$. \footnote{The transition kernel depends explicitly on time,
$\Psi(\vect{u},t+1|\vect{v},t,\vect{\gamma})$. However, time dependence has been
omitted here to simplify the notation.} 

The macroscopic state $\vect{U}(t)=\left(U_n(t)\right)_{n=1}^N$ that was
described informally in Section~\ref{sec:simulations} is the ensemble average of
a large number of realizations, each with different microscopic parameters. In the limit of
infinitely many realizations ($M\to\infty$), this corresponds to
taking the expectation of $\vect{u}$ with respect to the probability
distribution $P(\vect{u},t)$ 
of the microscopic realizations, 
\begin{equation}
  \vect{U}(t):=\E\left[\vect{u}(t)\right]=\int_{\BB^{N}}
  \u \, P(\vect{u},t) \,d\vect{u}, 
  \label{eq:macro_var} 
\end{equation} 
leading to the evolution map
\begin{equation}
    \vect{U}(t+1)=
    \int_{\BB^{N}} \vect{u} 
    \bigg[
    \int_{\RR^Q}\int_{\BB^{N}} 
    \Psi(\vect{u}|\vect{v},\vect{\gamma}) 
    p(\vect{v},\vect{\gamma},t) \, d\vect{v} \, d\vect{\gamma}\bigg] \, d\vect{u}.
\label{eq:macro_evolution} 
\end{equation} 
Clearly, the macroscopic evolution above cannot be written as a closed form
equation that depends explicitly on $\vect{U}(t)$, unless one makes a closure
approximation that specifies $p(\vect{u},\vect{\gamma},t)$ as a function of
$\vect{U}(t)$.
The focus of the present paper is to obtain bifurcation diagrams for fixed
points of the coarse map~\eqref{eq:macro_evolution}.
The algorithm that will be presented in Section~\ref{sec:eq-free} is a procedure
to impose the aforementioned closure
approximation numerically.

\begin{rem}[Low-dimensional coarse descriptions] Our macroscopic description is
    high-dimensional, in that $\vect{U}$ is a vector with $N$ entries.
    Lower-dimensional descriptions can be obtained expressing $\vect{U}$ in
    terms of a coarse polynomial or spectral
    basis~\cite{Gear2002a,Runborg2002f,Laing2006a}. Since we aim to develop a
    numerical framework suitable for high-dimensional coarse systems (and
    applicable to the low-dimensional descriptions as well), we
    will continue to use simple agent-wise coarse variables in this paper.
\end{rem}
\begin{rem}[Discrete distributions]\label{rem:mc_weights}
  Since $\vect{u}\in \BB^N$, the microscopic state belongs to a discrete set of
  possible admissible states with 
  cardinality
  $2^N$. Thus, the probability
  distribution can be written as 
  \[
    P(\vect{u},t)=\sum_{\vect{v}\in\BB^N}P^*(\vect{v},t)\delta(\vect{u}-\vect{v}),
  \]
  and integrals of the type~\eqref{eq:macro_var}
  should be interpreted as a discrete sum,
  \begin{equation}
    \int_{\BB^n}f(\vect{u}) P(\vect{u},t) \, d\vect{u}=
    \sum_{\vect{u}\in \BB^n}f(\vect{u})P^*(\vect{u},t).
    \label{eq:int_bernouilli}
  \end{equation}
  In other words, the integral is computed assigning to each possible
  configuration $\vect{u}$ a weight corresponding to its probability
  $P^*(\vect{v},t)$. In practical  computations, however, we will not be able to
  simulate all possible realizations $\u$, so we will approximate the integrals
  by a Monte Carlo estimate using $M\ll 2^N$ realizations,
  \begin{equation}
    \int_{\BB^n}f(\vect{u}) P(\vect{u},t)d\vect{u}\approx
    \dfrac{1}{M}\sum_{m=1}^Mf(\vect{u}^m),		
    \label{eq:int_mc}
  \end{equation}
  where $\vect{u}^m$ are sampled from the probability distribution $P(\vect{u},t)$.
\end{rem}

In Section~\ref{sec:simulations} we have introduced homogeneous and inhomogeneous
macroscopic states that we are now ready to characterise by means of coarse
bifurcation analysis: for the former, a simple one-dimensional coarse
description exists and will be discussed in the following section; for the
latter, we will use equation-free bifurcation analysis, which will be the
subject of Sections~\ref{sec:eq-free}--\ref{sec:bifurcationResults}.

\section{Homogeneous macroscopic states}
\label{sec:homStates}
We begin by characterising 
homogeneous macroscopic states, which are
described in terms of the average purchase
\begin{equation}\label{eq:rho_N}
\rho_N(t) = \frac{1}{N} \sum_n u_n(t) \in \mathbb{Q}_N.
\end{equation}
For each $t \in \mathbb{Z}^+$, $\rho_N(t)$ is a random variable, whose probability
distribution is denoted by 
\[
\bar{P}(\rho_N,t) = \int_{\Sigma_{\rho_n}}P(\vect{u},t)d\vect{u},
\qquad
\Sigma_{\rho_n} = \Set{ \vect{u} \in \BB^N | \rho_N = \frac{1}{N} \sum_n u_n}.
\] 

The numerical simulations of
Figure~\ref{fig:slaving} lead us to search for a coarse evolution map whose fixed
points correspond to the homogeneous metastable locked-in states of the lock-in
model. Following~\cite{Barkley2006y}, we search for a first moment map, that is,
a map that closes at the level of the first moment $\E[\rho_N(t)]$ of the
probability distribution $\bar{P}(\rho_N,t)$. In this section we show that a first moment map can be found explicitly under suitable hypotheses.
\begin{lem}\label{lem:coarseMap}
  Let us consider the lock-in model~\eqref{eq:phiMap} under the following hypotheses
  \begin{enumerate}
    \item All-to-all coupling, $\square_n = \set{1,\ldots,N}$ for all $n$.
    \item Deterministic evolution, that is, $\beta \to \infty$.
    \item Deterministic tendency to follow the neighbours 
      \[
      \lambda_n \sim \delta( \lambda - \anu), \quad 
      \anu \in (0,1), \quad n=1,\ldots,N.
      \]
    \item Homogeneous distribution of the quality perception $q_n
      \sim \mathcal{N}(q;\amu,\axi)$, $q_n$ \textnormal{i.i.d.}
  \end{enumerate}
  Further, let $\rho_N(t)$ be the 
 mean purchase
  as defined in~\eqref{eq:rho_N}.
  Then, in the limit as $N \to \infty$, we have 
  \[
  \E [ \rho_\infty(t + 1) ]  =  
     \frac{1}{2} \erfc \bigg[ \frac{1}{\axi \sqrt{2}} \bigg(
	 \anu\frac{1-2\E[\rho_\infty(t)]}{1-\anu} -\amu \bigg) \bigg] +
	 \mathcal{O}\big(\var[\rho_\infty(t)]\big).
  \]
\end{lem}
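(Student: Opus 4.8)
The plan is to collapse the stochastic update into a deterministic threshold on the quenched quality parameters, recognise that the resulting empirical fraction converges to a Gaussian tail (which is the \erfc), and then account for the fluctuations of $\rho_N(t)$ by a second-order Taylor expansion. First I would evaluate $\Delta f_n(\u(t))$ under Hypotheses~1 and~3: with all-to-all coupling the neighbourhood average in \eqref{eq:util} is exactly $\rho_N(t)$, and with $\lambda_n\equiv\anu$ the two utilities combine to
\[
\Delta f_n(\u(t)) = (1-\anu)\,q_n + \anu\big(2\rho_N(t)-1\big).
\]
By Hypothesis~2 the deterministic rule \eqref{eq:un_det} sets $u_n(t+1)=1$ precisely when $\Delta f_n>0$, which — since $\anu\in(0,1)$ — is equivalent to $q_n>q^\ast(\rho_N(t))$ with threshold $q^\ast(\rho) = \anu(1-2\rho)/(1-\anu)$. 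The tie $\Delta f_n=0$ has probability zero because $q_n$ is continuously distributed, so $u_n(t+1)=\mathbf 1[q_n>q^\ast(\rho_N(t))]$ almost surely and averaging yields $\rho_N(t+1)=\tfrac1N\sum_{n=1}^N\mathbf 1[q_n>q^\ast(\rho_N(t))]$.

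Next I would identify the limiting map. Under Hypothesis~4, for a \emph{fixed} threshold $s$ the standard relation between the Gaussian tail and the complementary error function gives
\[
\Proba\big(q_n>s\big) = \tfrac12\,\erfc\!\Big(\tfrac{s-\amu}{\axi\sqrt2}\Big) =: G(s).
\]
Setting $g(\rho):=G(q^\ast(\rho))$ reproduces exactly the right-hand side of the claim, evaluated at $\rho$. The content of the $N\to\infty$ limit is that the empirical fraction above the threshold converges to this probability, so that $\rho_N(t+1)\to g(\rho_N(t))$.

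Finally, to produce the dependence on $\E[\rho_\infty(t)]$ together with the $\bigo(\var[\rho_\infty(t)])$ remainder, I would Taylor-expand $g$ about $\bar\rho:=\E[\rho_\infty(t)]$,
\[
g(\rho_\infty(t)) = g(\bar\rho) + g'(\bar\rho)\big(\rho_\infty(t)-\bar\rho\big) + \tfrac12 g''(\eta)\big(\rho_\infty(t)-\bar\rho\big)^2,
\]
and take expectations: the linear term vanishes and the quadratic term is $\bigo(\var[\rho_\infty(t)])$ because $g=G\circ q^\ast$ is the composition of \erfc\ (which is $C^\infty$ on $\RR$ with bounded derivatives) with an affine map, so $g''$ is uniformly bounded. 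This gives $\E[\rho_\infty(t+1)] = g(\E[\rho_\infty(t)]) + \bigo(\var[\rho_\infty(t)])$, which is the assertion.

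The hard part will be justifying the convergence of the empirical fraction to $G$. Because the $q_n$ are quenched, the random threshold $q^\ast(\rho_N(t))$ and the indicators $\mathbf 1[q_n>q^\ast(\rho_N(t))]$ are constructed from the \emph{same} sample $(q_1,\dots,q_N)$, so $\tfrac1N\sum_n\mathbf 1[q_n>q^\ast(\rho_N(t))]$ is not an i.i.d.\ average against a deterministic threshold and a naive law of large numbers does not apply directly. The clean resolution is to write this fraction as $\widehat G_N(q^\ast(\rho_N(t)))$ with empirical tail $\widehat G_N(s)=\tfrac1N\sum_n\mathbf 1[q_n>s]$, and invoke \emph{uniform} (Glivenko--Cantelli) convergence $\widehat G_N\to G$ as $N\to\infty$, which controls the fraction simultaneously over all sample-dependent thresholds. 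I would isolate this uniform-convergence step as the one place where the limit does real work, and treat the algebra of the preceding steps as routine.
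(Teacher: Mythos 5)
Your proposal is correct, and its skeleton (reduce the update to a threshold on $q_n$, identify the Gaussian tail as $\tfrac12\erfc$, then a second-order Taylor expansion in $\rho$ around its mean to get the $\mathcal{O}(\var)$ remainder) coincides with the paper's. Where you genuinely diverge is in the treatment of the $N\to\infty$ decoupling, which is the crux. The paper works at the level of a single tagged agent: it writes $\E[u_n(t+1)]$ as an integral of the indicator against the joint density $\bar{p}(\rho_N,t,q)=\tilde{p}(\rho_N,t\,|\,q)\,\mathcal{N}(q;\amu,\axi)$ and then \emph{asserts} that in the limit $\tilde{p}(\rho_N,t\,|\,q)=\bar{P}(\rho_N,t)$, i.e.\ that the population mean becomes independent of any one agent's quenched quality --- a propagation-of-chaos-type factorization that is stated, not proved. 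You instead work at the level of the population average: the exact pathwise identity $\rho_N(t+1)=\widehat{G}_N\big(q^\ast(\rho_N(t))\big)$, with $\widehat{G}_N$ the empirical tail of the i.i.d.\ sample $(q_n)$, followed by Glivenko--Cantelli uniform convergence $\|\widehat{G}_N-G\|_\infty\to 0$, which neutralises the sample-dependence of the threshold without ever discussing conditional densities. Your route is the more rigorous resolution of exactly the correlation issue the paper glosses over (the threshold at time $t$ is built from the same $q_n$ being thresholded), and it yields a stronger, almost-sure one-step statement before expectations are taken; it also controls the Taylor remainder more carefully (Lagrange form plus uniform boundedness of $g''$, since $\erfc$ has bounded derivatives and $q^\ast$ is affine, whereas the paper simply evaluates $\Psi''$ at the mean). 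What the paper's route buys in exchange is a formulation directly in terms of the probability distribution $\bar{P}(\rho_N,t)$, which connects naturally to its surrounding discussion of moment maps and numerical closure; but as a proof, yours is the tighter one.
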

\begin{proof}
  Hypotheses 1 and 2 imply that the state of an individual agent $u_n(t+1)$ is a
  deterministic function of  
  $\rho_N(t)$ and the individual
  perceived quality $q_n$, which are both random quantities (see
  Remark~\ref{rem:deterministic}). Furthermore, Hypothesis 3 implies
  \footnote{ 
  We use Equation~\eqref{eq:un_det} and omit the case $ f^0_n\big(\u(t)
  \big) = f^1_n\big(\u(t) \big) $, which corresponds to an event of measure $0$.
  }
  \begin{equation}
    u_n(t+1 | \rho_N(t), q_n, \anu) =
    \begin{cases}
      1 & \text{if $q_n >\dfrac{\anu}{ 1 - \anu}( 1- 2 \rho_N(t))$,} \\[1em]
      0 & \text{otherwise,}
    \end{cases} \qquad
      n = 1,\ldots,N.
  \label{eq:unDet}
  \end{equation}

  Next, let us denote by $\bar{p}_n(\rho_N,t,q)$ the joint probability of obtaining 
  a mean purchase
  $\rho_N$ and a perceived quality $q$ for agent $n$. Owing to
 Hypotheses 1 and 4, the agent-wise expectation of $u_n$ with respect to all
 possible realizations of the microscopic parameters,
\begin{equation}
  \E[u_n(t+1)] = \int_{\mathbb{Q}_N} \int_\mathbb{R} u_n(t+1| \rho_N, q, \anu) 
    \, \bar{p}_n(\rho_N,t,q) \,  \, dq d\rho_N, 
  \end{equation}
is the same for all $n$, since $\bar{p}_n(\rho_N,t,q)=\bar{p}(\rho_N,t,q)$,
independently of $n$. Hence,
  \[
    \E[\rho_N(t)] = \frac{1}{N} \sum_n \E[ u_n(t) ] = \E[ u_n(t) ].
  \]
  Similarly, we write $\bar{p}(\rho_N,t,q)=\tilde{p}(\rho_N,t | q)
  \mathcal{N}(q;\amu,\axi)$, in which $\mathcal{N}(q;\amu,\axi)$ is the
  probability density of the Gaussian distribution from which $q$ was drawn. In
  the limit as $N$ tends to infinity, we moreover have that
  $\tilde{p}(\rho_N,t|q)=\bar{P}(\rho_N,t)$, as 
  the mean purchase
  is then independent of a specific agent's perceived quality. We then
  use~\eqref{eq:unDet} to obtain
  \begin{equation}
  \begin{split}
  \E [ \rho_N(t + 1) ] & = \E[ u_n(t+1) ] \\
    & \approx \int_{\mathbb{Q}_N} \int_\mathbb{R} u_n(t+1|\rho_N, q, \anu) 
      \, \bar{P}(\rho_N,t) \mathcal{N}(q;\amu,\axi)  \, dq \, d\rho_N  \\
      & = \int_{\mathbb{Q}_N} \bigg[ 
          \int_{\frac{\anu ( 1 - 2\rho_N(t)) }{1-\anu}}^\infty
	  \mathcal{N}(q; \amu, \axi) \, dq
      \bigg] \bar{P}(\rho_N,t) \, d \rho_N \\
     & = \int_{\mathbb{Q}_N} 
       \frac{1}{2} 
       \erfc \bigg[ \frac{1}{\axi \sqrt{2}} \bigg(
	 \anu\frac{1-2\rho_N(t)}{1-\anu} -\amu 
       \bigg) \bigg] \bar{P}(\rho_N,t) \, d\rho_N \\
     & = \E \bigg[ 
     \frac{1}{2} 
       \erfc \bigg[ \frac{1}{\axi \sqrt{2}} \bigg(
	 \anu\frac{1-2\rho_N(t)}{1-\anu} -\amu 
       \bigg) \bigg] \bigg] \\
      & := \E[ \Psi( \rho_N(t); \anu, \amu, \axi) ],
  \end{split}
  \label{eq:Erho}
  \end{equation}
where the expectation is taken over all possible values of $\rho_N$.

 The equation above does not close at the level of $\E[ \rho_N ]$,
  since $\Psi$ is a nonlinear function of $\rho_N$, and therefore $\E[\Psi(\rho_N)] \neq \Psi(\E[\rho_N])$.
  However, in the
  limit as $N \to \infty$, we can perform a Taylor expansion of $\psi(\rho)$
  around $\E[\rho]$, (see \cite{Helbing2011r,Weidlich2000y})
  \begin{equation}
  \begin{split}
  \E[\Psi(\rho_\infty)] & \approx \E\Big[ 
                      \Psi(\E[\rho_\infty]) + 
                      \Psi^\prime(\E[\rho_\infty])( \rho_\infty - \E[\rho_\infty]) +
		      \frac{1}{2}\Psi^{\prime\prime}(\E[\rho_\infty])(
		      \rho_\infty - \E[\rho_\infty])^2
		      \Big] \\
                 & = \E[ \Psi(\E[\rho_\infty])] + 
		      \frac{1}{2}\Psi^{\prime\prime}(\E[\rho_\infty])\E[ (
		      \rho_\infty - \E[\rho_\infty])^2] \\
		 & = \Psi(\E[\rho_\infty]) + \mathcal{O}( \var[\rho_\infty] ),
  \end{split}
  \label{eq:Taylor}
  \end{equation}
  which combined with~\eqref{eq:Erho} proves the assertion.
\end{proof}
\begin{rem}
    In Lemma~\ref{lem:coarseMap} we assume that the deterministic coupling
    constant $\anu$ is strictly between $0$ and $1$, in order to exclude
    trivial dynamics. If $\anu = 0$, then~\eqref{eq:Erho} gives
    \[
      \E[\rho_N(t+1)] = \frac{1}{2} \erfc \bigg( -\frac{\amu}{\axi \sqrt{2}}
      \bigg), \qquad t \in \Z_+
    \]
    that is, a microscopic equilibrium is reached after one time step and the
    corresponding macroscopic equilibrium does not depend upon initial
    conditions. This is to be expected, since $\anu=0$ means that agents
    disregard information about their neighbours, therefore initial conditions
    are not relevant to their choice.

    On the other hand, if $\anu = 1$ we cannot directly apply~\eqref{eq:Erho}.
    However, we have
    \[
    u_n(1) = 
    \begin{cases}
      1 & \text{\textnormal{if $ \sum_n u_n(0) > N/2$,}} \\
      0 & \text{\textnormal{otherwise,}}
    \end{cases} \qquad
      n = 1,\ldots,N,
    \]
    and so the system achieves a microscopic locked-in equilibrium after one time
    step. If, as was done in the numerical experiments of
    Figure~\ref{fig:homStates}, the microscopic initial conditions are independent
    identically-distributed variables, $u_n(0) \sim \mathcal{B}(u,0.5)$ for all
    $n=1,\ldots,N$, we have $\sum_{n=1}^N u_n(0) \sim \binomial(N,0.5)$,
    therefore
    \[
     \E [ \rho(1) ] = 1 - \prob \bigg[ \sum_{n=1}^N u_n(0) \leq N/2 \bigg]
		    = 1 - \frac{1}{2^N} \sum_{i=0}^{\lfloor N/2 \rfloor} \binom{N}{i}.
    \]
  \end{rem}

Lemma~\ref{lem:coarseMap} suggests a simple way to derive a coarse evolution map: if
the hypotheses of the lemma hold true and we are in the limit of
infinitely many agents, we can choose $U=\E[\rho_\infty] = \E[u_n]$ as our coarse
variable; then, to leading order, we obtain
\begin{equation}
U(t + 1) =  
   \frac{1}{2} \erfc \bigg[ \frac{1}{\axi \sqrt{2}} \bigg(
       \anu\frac{1-2U(t)}{1-\anu} -\amu \bigg) \bigg]
        := \Phi_\textrm{a}(U(t); \anu, \amu,\axi)
  \label{eq:coarse1DMap}
\end{equation}
It is clear that this is only an approximate evolution map, as we have
tacitly assumed that the probability distribution for $\rho_N(t)$ is unimodal
and sharply peaked, so that $\rho_\infty \approx \E[\rho_\infty]$ and
$\var[\rho_\infty] \approx 0$. The numerical simulations of
Section~\ref{sec:simulations} (in particular Figure~\ref{fig:slaving}) show that
this is a valid approximation on sufficiently short time scales. By analogy
with~\cite{Barkley2006y}, we expect that fixed points
of this first-moment map will inform us about metastable homogeneous
states of the full lock-in model, hence we proceed to discuss fixed points of the map
and their stability.

For simplicity, let us consider the case of equally-perceived products, such that $\amu = 0$, and
fixed standard deviation $\axi$ and study fixed points $U_*$ of $\Phi_\textrm{a}$
as $\anu$ is varied. For all $\anu$, the map possesses a fixed point at $U_*=1/2$,
the mixed state, which is stable for $\anu < \anu_\textrm{c}$, where
$\anu_\textrm{c}$ is computed as
\begin{equation}
\Phi'_\textrm{a}(1/2; \anu_\textrm{c}, 0,\axi) = 1 \quad \Rightarrow \quad
  \anu_c = \frac{1}{1 + \axi^{-1}\sqrt{2/\pi}}.
\label{eq:muCrit}
\end{equation}
At the critical point, two new fixed points arise (corresponding to equilibria
with increasingly high proportions of one product over the other), while the mixed
state becomes unstable at a pitchfork bifurcation. Since we have an analytic
expression for $\Phi_\textrm{a}$, we can readily apply numerical continuation
techniques and obtain the bifurcation diagram shown in
Figure~\ref{fig:analyticContinuation}.
\begin{figure}
  \centering
  \includegraphics{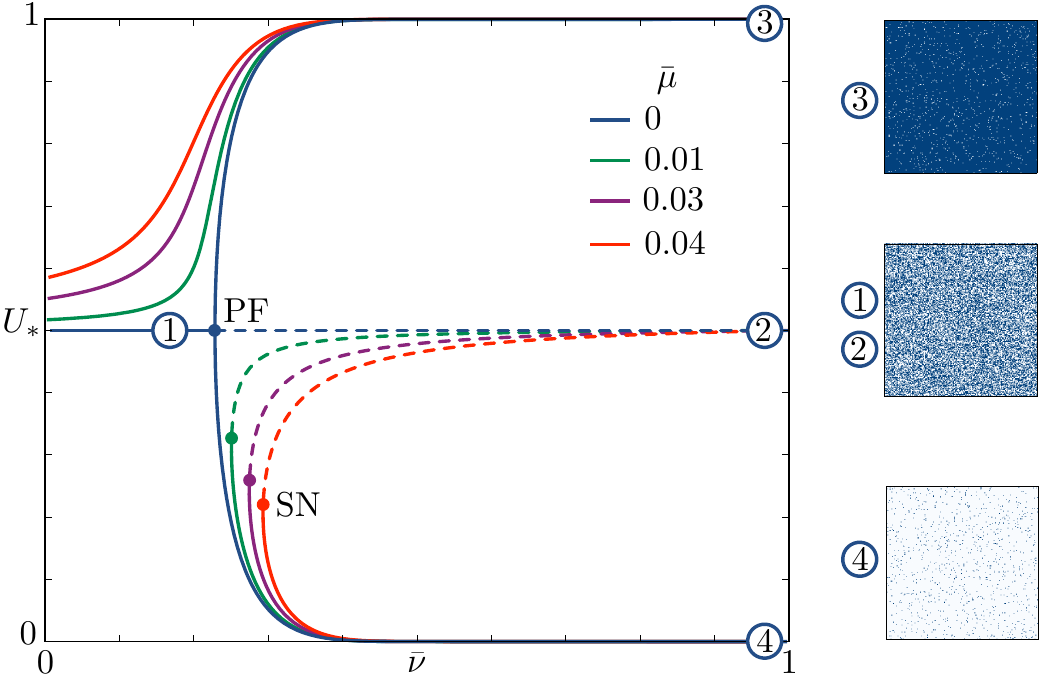}
  \caption{Fixed points of the approximate coarse map $\Phi_\textrm{a}$ as a function
  of the coupling parameter $\anu$ for $\axi=0.236$ and various values of
  $\amu$. Solid (dashed) lines represent stable (unstable) coarse equilibria.
  Representative microscopic solutions are plotted on the right. If agents are
  unbiased on average ($\amu=0$), the mixed state becomes unstable at a
  pitchfork bifurcation (PF), attained at the critical value given by
  \eqref{eq:muCrit}, and two locked-in states emerge. If agents have an average
  bias towards product 1 ($\amu > 0$), the pitchfork breaks down, giving rise to a
  saddle node bifurcation (SN). A similar scenario occurs if $\amu < 0$ (not shown).}
  \label{fig:analyticContinuation}
\end{figure}
 
Similar considerations are valid if we assume that the population has a bias
towards one of the products ($\amu \neq 0$).  Then,
the pitchfork breaks into two branches: one of them, corresponding to the product
with higher average perceived quality, is stable for all values of the coupling
$\anu$, whereas the other one destabilises at a saddle-node bifurcation. As expected,
the basin of attraction of the locked-in state is larger for the product with a
higher perceived quality.

To understand better the relation between fixed points of $\Phi_\textrm{a}$ and
metastable states of the lock-in model we refer to Figures~\ref{fig:homStates}
and~\ref{fig:slowManifold}. In the bistable region of parameter space, each
realization of the stochastic process evolves to a state that corresponds to one of
the stable fixed points of the coarse evolution map over reasonably short time
scales; these equilibria in Figure~\ref{fig:homStates} correspond to the fixed points
of the coarse evolution map on Figure~\ref{fig:analyticContinuation}.  For a
microscopic stochastic simulation starting close to such an equilibrium, this implies
that there is a distribution of 
mean purchases, unimodal and sharply peaked around
this population average, as can be observed in Figure~\ref{fig:slowManifold}.  The
equilibria computed from the analytic coarse evolution map approximate this
distribution using a Dirac-delta distribution.  However, over long time scales, both
metastable states are equally likely to occur, as Figure~\ref{fig:slowManifold}
shows.

\section{Equation-free Newton-Krylov method}\label{sec:eq-free}

In this section, we aim to obtain a numerical closure relation for the
evolution of the 
$x$-dependent macroscopic state
$\vect{U}=(U_n)_{n=1}^N$. In this case, an analytical closure approximation is
no longer valid. We thus propose an equation-free method.  We first outline the
general principle of the equation-free methodology
(Section~\ref{sec:ef-principle}). Next, we describe the concrete lifting and
restriction operators that will be used (Section~\ref{sec:ef-lift}). The main
algorithmic contribution of the present paper is the introduction of a weighted
lifting and restriction operator that allows the accurate computation of
Jacobian-vector products, as will be discussed in Section~\ref{sec:ef-Jv}.

\subsection{Principle}\label{sec:ef-principle}

As was shown in the previous sections, the lock-in model~\eqref{eq:phiMap}
consists, at the microscopic level, of individual
agents whose state keeps evolving, owing to the probabilistic nature of their
choices. Nevertheless, at the macroscopic level,  the
ensemble average \eqref{eq:macro_var} is seen to evolve to a metastable equilibrium.
In this paper, we are interested in performing a bifurcation analysis at the
macroscopic level, at which an exact, closed model is not available. The equation-free framework was developed for such tasks
\cite{Kevrekidis2003r,Kevrekidis2009q}.

The main building block in an equation-free method is the \emph{coarse
time-stepper}, which allows the performance of time-steps at the macroscopic level
(defined by \eqref{eq:macro_evolution}), using only the
simulation of $M$ realizations of the microscopic model \eqref{eq:phiMap}.  To
achieve this, the procedure relies on the definition of two operators
(\emph{lifting} and \emph{restriction}) that relate the microscopic and
macroscopic levels of description. The lifting operator maps a macroscopic state
to a microscopic one, that is, starting from a macroscopic state $\vect{U}$ and
macroscopic parameters $\vect{\Gamma}$, it generates an ensemble,
\begin{equation}\label{eq:ef-principle-ensemble}
\vect{B}=\left[\vect{u}^m\right]_{m=1}^{M} \in \BB^{N \times M},
\end{equation}
of $M$ realizations $\vect{u}^m$ ($m = 1,\ldots, M$) of the microscopic state (each
consisting of $N$ individual agents) from the ensemble average $\vect{U} \in
\RR^{N}$, as well as a set of $Q$ microscopic parameters for each agent,
$\vect{A}=[\vect{\gamma}^m]_{m=1}^M$, with $\vect{\gamma}^m\in \RR^{Q\times
N}$, sampled from the parameter distributions \eqref{eq:parDistr}
specified by the macroscopic parameters $\vect{\Gamma} = (\amu,\dmu,\alpha,\axi,\anu,\azeta)
\in \RR^P$. 
We remark that, once sampled, the agents' parameters are kept fixed
throughout the evolution step.

When generating random realizations for the microscopic state and
parameters, applying the same lifting operator multiple times will give
different results, depending on the precise random numbers that were generated
during the process.  We will denote this set of random numbers by
$\vect{\omega}\in\vect{\Omega}$, in which the sample space $\vect{\Omega}$
represents all possible sets of random numbers that can be generated; it may be
convenient to think of $\vect{\omega}$ as the set of seeds of all the
random number generators involved. This leads to an operator of the form:
\begin{equation}
  \begin{aligned}
\L \colon  \RR^N \times \RR^{P}\times \vect{\Omega} & \longrightarrow  \BB^{N \times M} \times \RR^{Q \times N \times M },\\
(\vect{U},\vect{\Gamma},\vect{\omega}) & \longmapsto \big( \vect{B}, \vect{A} \big).
  \end{aligned}
\end{equation}
In the remainder, we will also denote the lifting by 
\begin{equation}\label{eq:ef-principle-lift}
  (\vect{B},\vect{A})= \big( \L_u(\vect{\omega}) \vect{U},
  \L_\gamma(\vect{\omega}) \vect{\Gamma}\big) = \L(\vect{\omega})(\vect{U},\vect{\Gamma}),
\end{equation}
to emphasize that $\vect{\omega}$ only appears as a parameter.
\begin{rem}[Dependence on the random event $\vect{\omega}$]
The explicit introduction of the parameter $\vect{\omega}$ may seem elaborate at
first. Nevertheless, in the remainder of the text, especially when discussing
the computation of variance-reduced Jacobian vector products in
Section~\ref{sec:ef-Jv}, this notation will prove to be indispensable.
\end{rem}
\begin{rem}[Closure approximation]
The microscopic realizations have to be sampled from a probability
distribution $P(\vect{u},t)$ that is consistent with $\vect{U}(t)$, that is, we require
$\int_{\BB^{N}}\vect{u}\,P(\vect{u},t)\,d\vect{u}=\vect{U}(t)$. At this
point, we have not yet specified what probability distribution $P(\vect{u},t)$
will be used to this end. Choosing $P(\vect{u},t)$ amounts to enforcing a
closure approximation. In Section~\ref{sec:ef-lift}, we will construct several
lifting operators that perform this closure approximation numerically.
\end{rem}

Conversely, the restriction operator maps a microscopic state to a macroscopic
one, that is, it computes an appropriate ensemble average $\vect{U}\in \RR^N$ of
the $M$ realizations $\vect{B}=\left[\vect{u}^m\right]_{m=1}^{M}\in \BB^{N\times
M}$:
\begin{equation}
\begin{aligned}
  \R \colon  \BB^{N \times M} & \longrightarrow \RR^N,\qquad 
                            \vect{B} \longmapsto \vect{U}.
\end{aligned}
\label{eq:ef-principle-restrict}
\end{equation}
As a general principle, one expects the macroscopic state to be unchanged when
performing lifting followed by restriction, that is,
\begin{equation}\label{eq:ef-principle-identity}
	\R \circ \L_u \equiv \Id.
\end{equation}
In general, however, $\L_u \circ \R \neq \Id$, since it is impossible to recover exactly the microscopic information during lifting that was discarded during restriction. For the problem considered here, even
ensuring \eqref{eq:ef-principle-identity} is nontrivial, because one cannot
represent every possible value of $\vect{U}$ exactly as the ensemble average of
$M$ microscopic realizations. Specific operators that circumvent this problem
are proposed in Section~\ref{sec:ef-lift}.

Once lifting and restriction operators have been constructed, a coarse
time-stepper $\Phi_T^M$ to evolve the macroscopic state $\vect{U}$ over a time
interval of length $T$ is constructed as a three-step-procedure
(lift--evolve--restrict), in which the microscopic evolution is simulated
independently for each of the $M$ realizations, i.e.,
\begin{equation}\label{eq:ef-principle-cts}
  \vect{U}(t+T)=\Phi_T^M(\vect{\omega}) ( \vect{U}(t); \vect{\Gamma}) = ( \R \circ \EE_T \circ \L(\vect{\omega}) )
  \big( \vect{U}(t) ; \vect{\Gamma} \big),
\end{equation}
with $\L(\vect{\omega})$ and $\R$ defined in \eqref{eq:ef-principle-lift} and
\eqref{eq:ef-principle-restrict}, and $\EE_T$ defined as 
\begin{equation}
\begin{aligned}
  \EE_T \colon  \BB^{N \times M} \times \RR^{Q\times N \times M} 
                         & \longrightarrow \BB^{N \times M}, \;\;
                   (\vect{B},\vect{A}) &\longmapsto 
			    \big[\varphi_T(\vect{u}^m;\vect{\gamma}^m)\big]_{m=1}^M,
\end{aligned}
\label{eq:ef-principle-evolve}
\end{equation}
where we have denoted by $\varphi_T$ the $T$th iterate of the lock-in
map~\eqref{eq:phiMap}. Note that, in the limit $M \to \infty$, the coarse
time-stepper approaches
\begin{equation} \label{eq:cts_limit}
  \vect{U}(t+T)=
  \int_{\BB^{N}}\vect{u}\left[\int_{\RR^Q} \int_{\BB^{N}}\Psi_T(\vect{u}|\vect{v},\vect{\gamma}) 
  p(\vect{v},t,\vect{\gamma}| \vect{U}(t), \vect{\Gamma}) \, d\vect{v}d\vect{\gamma}\right]d\vect{u},
\end{equation} 
in which we have introduced the transition kernel $\Psi_T$ over a time interval
$T$ and the probability distribution
$p(\vect{v},t,\vect{\gamma}|\vect{U}(t),\vect{\Gamma})$ conditioned
upon $(\vect{U}(t),\vect{\Gamma})$, from which the samples are taken. 
The interpretation of the coarse time-stepper as a numerical closure follows by
comparing equation~\eqref{eq:cts_limit} with~\eqref{eq:macro_evolution}, and
noticing that the right-hand side is completely determined by $\vect{U}(t)$ and $\vect{\Gamma}$, since the probability distribution $p(\vect{v},t,\vect{\gamma}|\vect{U}(t),\vect{\Gamma})$ is conditioned upon $\vect{U}(t)$ and $\vect{\Gamma}$. 

If the system \eqref{eq:phiMap} possesses macroscopic steady states, these can be
found (for fixed macroscopic parameters $\vect{\Gamma}=\vect{\Gamma}_*$) by solving the
nonlinear system,
\begin{equation}\label{eq:ef-principle-newton}
  \vect{F}(\vect{U}_*) = 
\vect{U}_* -
\Phi^M_T(\vect{\omega})(\vect{U}_*,\vect{\Gamma}_*) = 0,
\end{equation}
for an appropriate choice of $M$ and $T$.
This procedure allows the computation of unstable steady states that would
not be reached by direct simulation. By adding a pseudo-arclength condition, one
can also perform continuation to obtain a branch of steady states as a function
of a free parameter.

In each Newton iteration, one needs to solve a linear system involving the
Jacobian of $\Phi_T^M$, denoted as $D\Phi_T^M(\vect{U};\vect{\Gamma})$. Since we do not have
an explicit formula for $D\Phi_T^M(\vect{U};\vect{\Gamma})$, we are forced to
use an iterative method (such as GMRES) that only requires Jacobian-vector
products, and to estimate such Jacobian-vector products using a finite
difference approximation.  However, we recall that, for a finite number of
realizations $M$, the coarse time-stepper $\Phi_T^M$ is stochastic.  Hence,
repeating the same coarse time-step with two sets of random numbers
$\vect{\omega}_{1,2}$ gives different results.  A standard Monte Carlo argument
\cite{caflisch1998monte} reveals that
\[\var \left[ \Phi_T^M(\vect{\omega}_1)(\vect{U};\vect{\Gamma}) -
\Phi_T^M(\vect{\omega}_2)(\vect{U};\vect{\Gamma}) \right] \le
C\dfrac{1}{M},  
\]
resulting in typical deviations of $\mathcal{O}(1/\sqrt{M})$.
Then, estimating Jacobian-vector products using the simple finite-difference
formula
\begin{align}
  D\Phi_T^M(\vect{U};\vect{\Gamma}) \vect{V} &\approx
  \frac{\Phi_T^M(\vect{\omega}_2)(\vect{U} +
  \varepsilon \vect{V} ;\vect{\Gamma}) -%
  \Phi_T^M(\vect{\omega}_1)(\vect{U} ;\vect{\Gamma})}{\varepsilon}, \\
&\approx
  \frac{\Phi_T^M(\vect{\omega}_2)(\vect{U} ;\vect{\Gamma}) + \varepsilon D\Phi_T^M(\vect{\omega}_2)(\vect{U};\vect{\Gamma}) \vect{V}-%
  \Phi_T^M(\vect{\omega}_1)(\vect{U} ;\vect{\Gamma})}{\varepsilon}, 
\label{eq:ef-principle-jacFD}
\end{align}
with $\varepsilon\ll 1$ will result in an $\mathcal{O}(1/(\varepsilon^2
M))$ variance.

Consequently, the variance of $D\Phi_T^M(\vect{U};\vect{\Gamma})
\vect{V}$ 
will grow unboundedly as $\varepsilon$ tends to zero.
One should therefore
aim at using the same random numbers twice, both with the unperturbed and
perturbed initial conditions. 
A method to enforce
the use of the same random numbers is proposed in Section~\ref{sec:ef-Jv}.

\subsection{Lifting and restriction}\label{sec:ef-lift}

In this section, we describe two lifting operators, as well as their corresponding
restriction operator. For both approaches, the microscopic parameters are initialized by
generating i.i.d.~samples for each agent in each realization from the governing
probability distributions~\eqref{eq:parDistr}. The difference between both
lifting and restriction operators is limited to the initialization of the
microscopic state.  We emphasize as well that, for each realization, the
microscopic state is initialized independently of the parameter values. 

\subsubsection{Simple lifting and restriction}

Let us first describe a simple approach. We are given a macroscopic state
$\vect{U}=(U_n)_{n=1}^N$ and we want to generate $M$ realizations of $N$ agents,
consistently with that macroscopic state. To create these microscopic
realizations $\vect{B}=[\vect{u}^m]_{m=1}^M$, with $\vect{u}^m=(u_n^m)_{n=1}^N$,
we can sample, at each 
$x_n$, 
the Bernoulli distribution with
mean $U_n$, that is
\begin{equation}\label{eq:naive-lifting}
    u_n^m \sim \B(u;U_n) \iff
    \begin{cases}
     \Pr(u_n^m = 1) = U_n \\
     \Pr(u_n^m = 0) = 1-U_n,
    \end{cases}
    \quad m = 1,\ldots,M 
\end{equation}
Combining this sampling of the microscopic state with a sampling procedure for
the microscopic parameters of the individual agents, we obtain a lifting operator
$\L(\vect{\omega})$
of the type~\eqref{eq:ef-principle-lift}.
The corresponding restriction operator is then given by taking the empirical
average over the set of $M$ realizations, 
\begin{equation}
\begin{aligned}
  \R \colon  \BB^{N \times M} & \longrightarrow \RR^N,\qquad \vect{B} &
  \longmapsto \vect{U}=\frac{1}{M} \sum_{m=1}^M \vect{u}^m.
\end{aligned}
\label{eq:ef-naive-restrict}
\end{equation}

The simple lifting and restriction operators defined above cannot
satisfy the consistency condition~\eqref{eq:ef-principle-identity} for an
arbitrary value of $\vect{U}(t)$, since the restriction can only map onto
$\mathbb{Q}^N_M = \mathbb{Q}_M \times \dots \times \mathbb{Q}_M$ instead of onto
$\RR^N$, i.e., only integer fractions of $M$ can be represented. The
incurred
discrepancy is
essentially a sampling error, since the sampling procedure outlined above only
ensures~\eqref{eq:ef-principle-identity} in the limit $M \to \infty$. Indeed,
when $\left(\vect{B},\vect{A}\right)=\L(\vect{U},\vect{\Gamma})$, with
$\vect{B}=\left[\u^m\right]_{m=1}^M$, then 
\begin{equation} 
  \lim_{M \to \infty}\frac{1}{M} \sum_{m=1}^M \vect{u}^m = \vect{U}.
  \label{eq:ef-naive-restrict-limit} 
\end{equation}

\subsubsection{Weighted lifting and restriction}

The main idea of the present paper, which is key to all the numerical methods
that follow, is the introduction of a new restriction operator that replaces
the empirical average~\eqref{eq:ef-naive-restrict} by a weighted average of
the form
\begin{equation}
  \R_{\textrm{w}}(\vect{w}) \colon  \BB^{N \times M} \longrightarrow \RR^N,\qquad 
		\vect{B} \longmapsto \vect{U}=\frac{1}{M} \sum_{m=1}^M w^m\vect{u}^m,
\label{eq:ef-weighted-restrict}
\end{equation}
in which $\vect{w}=\left[w^m\right]_{m=1}^{M}\in \RR^M$ is a vector of weights
satisfying
\begin{equation}\label{eq:ef-weights-pdf}
  \frac{1}{M}\sum_{m=1}^M w^m=1.  
\end{equation}
The restriction operator $\R_{\textrm{w}}$ is specified completely only once
the weights $\vect{w}$ are known; they will be selected such that the
constraint~\eqref{eq:ef-principle-identity} is satisfied
exactly, which implies that the restriction operator will depend on the
specific realizations $\vect{u}^m$ that were generated during the lifting.

Before outlining the procedure, let us highlight the rationale behind
the introduction of the weighted average. As noted in
Remark~\ref{rem:mc_weights}, the probability distribution $P(\vect{u},t)$ can
be discretized according to two guiding principles: (i) \emph{deterministically},
that is, we consider every possible realization and attach to it a probability
weight expressing how likely the realization is to occur, which results in
equation~\eqref{eq:int_bernouilli}; or (ii) \emph{stochastically}, that is, we
sample a finite number of realizations from the corresponding probability
distributions, resulting in the estimate~\eqref{eq:int_mc}. Option (i)
is unfeasible because it requires considering $M=2^N$ realizations (many of
which will be extremely unlikely), while option (ii) will contain a sampling
error such that the identity~\eqref{eq:ef-principle-identity} is violated.
Introducing the weighted restriction~\eqref{eq:ef-weighted-restrict} can then be
seen as a hybrid approach that allows
satisfying~\eqref{eq:ef-principle-identity} with a limited number of
realizations $M\ll 2^N$; the condition~\eqref{eq:ef-weights-pdf}
ensures that the weight $w^m$, attached to the realization $\vect{u}^m$, can be
interpreted as the probability of obtaining that realization out of all the realizations in the sample. This interpretation also imposes the condition that all the weights be positive. We shall see
there is an interplay between the creation of the $M$ realizations and the
computation of the corresponding weights for the restriction.

A possible way to compute weights is the following: first, we generate $M'$
realizations $\vect{u}^m$ according to the naive procedure~\eqref{eq:naive-lifting};
since we know
that this procedure yields
the desired result as $M'$ tends to infinity, it seems reasonable to attach weights
that are as close to 1 as possible, while satisfying the
identity~\eqref{eq:ef-principle-identity}, as well as the
constraint~\eqref{eq:ef-weights-pdf}. As will become clear further on, this
procedure will turn out to allow for optimization problems that are either
unfeasible (with no possible solutions) or ill-posed (with infinitely many
possible solutions).  To see this, we formulate the following least squares
problem,
\begin{align}
  &\vect{w} = \arg\min \frac{1}{2} \sum_{m=1}^{M'} \left(w^m -
  1\right)^2, \label{eq:min1} \\ 
  &\frac{1}{M'} \sum_{m=1}^{M'} w^m\vect{u}^m=\vect{U}, \label{eq:min2} \\ 
  & \frac{1}{M'}\sum_{m=1}^{M'} w^m=1,\label{eq:min3} \\
&w^m \geq 0 \qquad 1\le m \le M'. 
\end{align} 
We recall here a basic result in minimization
problems~\cite{Gould2001a,Madsen2004y}:
\begin{lem}\label{lem:minProb}
  Let us consider the following equality-constrained quadratic minimization problem
  \begin{align*}
    & \vect{w} = \arg \min \frac{1}{2} \vect{w}^T\vect{w}
    - \vect{g}^T \vect{w}, \\
    & \vect{C} \vect{w} = \vect{b},
  \end{align*}
  where $\vect{w},\vect{g} \in \RR^M$, $\vect{b} \in \RR^N$ and where $\vect{C} \in
  \RR^{N \times M}$, with $N<M$, is a constraint matrix with full rank, then
  \begin{equation}
  \begin{bmatrix}
    \vect{I} & \vect{C}^T \\
    \vect{C} & \vect{0}
  \end{bmatrix}
  \begin{bmatrix}
    \vect{w} \\
    \vect{\lambda}
  \end{bmatrix}
  =
  \begin{bmatrix}
    \vect{g} \\
    \vect{b}
  \end{bmatrix},
  \label{eq:weightsLinSys}
  \end{equation}
  where $\vect{\lambda} \in \RR^{N}$ is the associated Lagrange
  multiplier. The linear system~\eqref{eq:weightsLinSys} has a unique solution.
\end{lem}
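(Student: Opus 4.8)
The plan is to treat this as a standard equality-constrained convex quadratic program and to recognise the displayed block system as its first-order (KKT) optimality conditions. First I would introduce the Lagrangian
\[
L(\vect{w},\vect{\lambda}) = \tfrac{1}{2}\vect{w}^T\vect{w} - \vect{g}^T\vect{w} + \vect{\lambda}^T(\vect{C}\vect{w}-\vect{b}),
\]
and impose stationarity in $\vect{w}$, $\nabla_{\vect{w}} L = \vect{w} - \vect{g} + \vect{C}^T\vect{\lambda} = \vect{0}$, together with primal feasibility $\vect{C}\vect{w}=\vect{b}$. Assembling these two equations into block form reproduces exactly the linear system~\eqref{eq:weightsLinSys}. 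Since the objective is strictly convex (its Hessian is $\vect{I}$, which is positive definite) and the feasible set is an affine subspace, the KKT conditions are both necessary and sufficient for a global minimiser, so any solution of the block system is the sought optimal $\vect{w}$.

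The substantive part of the claim is that this block system has a \emph{unique} solution, i.e.\ that the KKT matrix is nonsingular, and I would prove this by showing its null space is trivial. Suppose $(\vect{w},\vect{\lambda})$ satisfies the homogeneous system, so that $\vect{w}+\vect{C}^T\vect{\lambda}=\vect{0}$ and $\vect{C}\vect{w}=\vect{0}$. Left-multiplying the first equation by $\vect{w}^T$ and using $\vect{w}^T\vect{C}^T = (\vect{C}\vect{w})^T = \vect{0}$ yields $\vect{w}^T\vect{w}=0$, hence $\vect{w}=\vect{0}$. The first equation then reduces to $\vect{C}^T\vect{\lambda}=\vect{0}$, and since $\vect{C}$ has full row rank $N$ (with $N<M$), its transpose $\vect{C}^T$ has full column rank and is therefore injective, forcing $\vect{\lambda}=\vect{0}$. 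Thus the homogeneous system admits only the trivial solution, the KKT matrix is invertible, and the inhomogeneous system has a unique solution.

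The only place the hypotheses genuinely enter is this last step: the full-rank assumption on $\vect{C}$ is precisely what pins down $\vect{\lambda}$, and I expect it to be the main (if modest) obstacle, everything else being routine convex-quadratic bookkeeping. As an equivalent route worth mentioning, one can eliminate $\vect{w}=\vect{g}-\vect{C}^T\vect{\lambda}$ from the constraint to obtain the reduced system $\vect{C}\vect{C}^T\vect{\lambda}=\vect{C}\vect{g}-\vect{b}$; full row rank makes the Gram matrix $\vect{C}\vect{C}^T$ symmetric positive definite and hence invertible, which recovers $\vect{\lambda}$ uniquely and then $\vect{w}$ by back-substitution, giving a second, constructive proof of uniqueness.
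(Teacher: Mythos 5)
Your proof is correct, and in fact the paper does not prove this lemma at all: it states it as ``a basic result in minimization problems'' and simply cites standard optimization references. Your argument is precisely the standard one found in those references --- form the Lagrangian, read off the stationarity and feasibility conditions as the block KKT system, then establish nonsingularity of the KKT matrix by the null-space argument ($\vect{C}\vect{w}=\vect{0}$ and $\vect{w}+\vect{C}^T\vect{\lambda}=\vect{0}$ force $\vect{w}=\vect{0}$ by positive definiteness of the Hessian on the kernel of $\vect{C}$, and then $\vect{\lambda}=\vect{0}$ by full row rank of $\vect{C}$). Your alternative route via the Schur complement, $\vect{C}\vect{C}^T\vect{\lambda}=\vect{C}\vect{g}-\vect{b}$ with $\vect{C}\vect{C}^T$ symmetric positive definite, is also standard and is essentially what underlies the single-Cholesky-factorization solution strategy the authors mention using in practice; so your write-up not only fills the gap the paper leaves to its references but also connects to the paper's computational remark.
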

Let us now consider the difficulties that may lead to a rank-deficient constraint matrix $\vect{C}$:
\begin{enumerate}
  \item
    The sampling procedure~\eqref{eq:naive-lifting} can yield multiple identical
    realizations of the microscopic state (identical columns in the constraint
    matrix). For instance, this may happen with high probability,
    if the macroscopic state $\vect{U}$ is close to $0$ or $1$ for all agents, such
    that all realizations consists of almost all $0$ or all $1$). 
  \item
    The sampling procedure can also yield repeated rows in the constraint matrix,
    when two agents ($n_1$ and $n_2$) have an identical state
    in each of the realisations, that is, $u^m_{n_1}=u^m_{n_2}$ for all $m$. This also
 happens with high probability if the macroscopic state $U_n$ is close to
    $0$ or $1$ for two or more agents.  
 \item
 For a given agent $n$, one might find that all realizations have the same
   value ($0$ or $1$). When $U_n$ is not identically $0$ or $1$, this leads to an infeasible constraint; again, this situation is likely to occur
   when the macroscopic state $U_n$ for some agent $n$ is close to $0$ or $1$.
\end{enumerate}
To circumvent these problems, we will discard duplicate realizations during the
computation of the weights and extend the
sample set with artificially created samples. To minimise perturbations with respect
to the underlying probability distributions, the target weights $\vect{g}$ will be
adjusted accordingly.

To be specific, we  circumvent the
first problem as follows. We denote by $\vect{B}'$ the ensemble of $M'$
realizations that were generated with the procedure~\eqref{eq:naive-lifting},
and write this ensemble in a different representation 
$(\tilde{\vect{B}}, \vect{g})$
where we only retain \emph{unique} realizations, as well as their \emph{cardinality}
in the ensemble
$\vect{B}'$,
\begin{equation}
  \tilde{\vect{B}}=\left[\vect{u}^m\right]_{m=1}^{\tilde{M}}, \qquad
  \vect{g}=\left[g^m\right]_{m=1}^{\tilde{M}},
\end{equation} 
where
\begin{equation}
  g^m=\#\{\u^m\in\vect{B}'\} \quad \text{for all $m$ such that  $\vect{u}^m \in
  \tilde{\vect{B}}$}.
\end{equation}
We note that, by definition, we have $\sum_{m=1}^{\tilde{M}}g^m=M'$.
During optimisation, we will then compute a weight for each
single realisation that is close to $g^m$ (see later), to take into account
the fact that each realisation appeared $g^m$ times in our original sampling.  Note
that afterwards we retain all individual realisations, since they will have
different values for the microscopic parameters over which we want to average.

To circumvent the second and third problem, we create artificial realizations in
the lifting step that are unlikely to be obtained by the naive
sampling procedure~\eqref{eq:naive-lifting}, and assign to them a
target weight of $0$ to minimise artefacts in the resulting probability
distributions. First, we scan the new constraint matrix
and search for duplicate rows. For each repeated row $n_\textrm{r}$, we add a
realization as follows
\begin{equation}
u_n =
\begin{cases}
  1 & \text{if $n=n_{\textrm{r}}$,} \\
  0 & \text{otherwise.}
\end{cases}
\label{eq:nr}
\end{equation}
Then, we check if there exists a row $n_0$ of 0s or a row $n_1$ of 1s and if so
add the following realizations respectively
\begin{equation}
u_n =
\begin{cases}
  1 & \text{if $n=n_0$,} \\
  0 & \text{otherwise,}
\end{cases}
\qquad
u_n =
\begin{cases}
  0 & \text{if $n=n_1$,} \\
  1 & \text{otherwise.}
\end{cases}
\label{eq:n0n1}
\end{equation}

We collect all additional realizations~\eqref{eq:nr}--\eqref{eq:n0n1} in the set
$\vect{B}''$ and compute the cardinality as follows:
\begin{equation}
  g^m=0, \qquad \text{for all $m$ such that $\vect{u}^m \in \vect{B}''$}
  \label{eq:goalZero}
\end{equation}
indicating that those realizations appear with cardinality $0$ in the original
sampling according to procedure~\eqref{eq:naive-lifting} and have only been
added to regularise the constraint matrix.

A weighted lifting operator $\mathcal{L}_{\textrm{w}}(\vect{\omega})$ is then given as the
set of $M=M'+M''$ realizations $\vect{B}=\vect{B}'\cup\vect{B}''$, along with the
correspondingly sampled microscopic parameter values $\vect{A}$, which together form
an operator of the type~\eqref{eq:ef-principle-lift}.  
The weights $\vect{w}\in \RR^{M}$ (with $M=M'+M''$) that will be used in the
restriction $\R_{\textrm{w}}$ are such that both~\eqref{eq:ef-weighted-restrict}
and~\eqref{eq:ef-weights-pdf} are satisfied, and such that the natural sampling
frequencies, as exemplified by the counters $g^m$, are matched as closely as
possible. 
We first compute weights $\tilde{\vect{w}}\in \RR^{\tilde{M}+M''}$ for all
elements of $\tilde{\vect{B}}\cup\vect{B}''$ by solving the regularised constrained
minimization problem 
\begin{align}
	&\tilde{\vect{w}}= \arg\min \frac{1}{2} \sum_{m=1}^{\tilde M + M''} \left(\tilde{w}^m - \dfrac{M}{M'}g^m\right)^2,
\label{eq:minProb} \\ 
&\frac{1}{M} \sum_{m=1}^{\tilde M + M''} \tilde{w}^m\vect{u}^m=\vect{U},
\label{eq:constr1}\\ 
& \dfrac{1}{M}\sum_{m=1}^{\tilde M + M''} \tilde{w}^m=1,\label{eq:constr2}\\
& \tilde{w}_m \geq 0, \qquad 1\le m \le \tilde M + M''. \label{eq:constr3}
\end{align} 
In the system above, we conventionally assumed that
$\vect{u}^m\in\tilde{\vect{B}}$ when $1\le m \le \tilde{M}$ and
$\vect{u}^m\in\vect{B}''$ when $\tilde{M} < m \le \tilde{M}+M''$.
The choice of the goal function~\eqref{eq:minProb} ensures that
constraints~\eqref{eq:constr1} and~\eqref{eq:constr2} are not affected by the
presence of additional realizations with weights that are identically zero, that
is, $w^m=(M/M')g^m$ is a solution that satisfies~\eqref{eq:constr2}.

We then transform the weights $\tilde{\vect{w}}\in \RR^{\tilde{M}+M''}$ back to
weights $\vect{w}\in\RR^{M}$ for the $M=M'+M''$ realizations in
$\vect{B}=\vect{B}'\cup\vect{B}''$. This is done by selecting, for each element
$\vect{u}^m\in\vect{B}'$, the (unique) index $\tilde{m}^*$ such that
$\vect{u}^{m}=\vect{u}^{\tilde{m}^*}$ with
$\vect{u}^{\tilde{m}^*}\in\tilde{\vect{B}}$, and setting
$w^m=\tilde{w}^{m^*}/g^{m^*}$ for $1 \leq m \leq \tilde M$.

\begin{rem}[Effect of regularization on probability distributions of the microscopic
  states] 
  \label{rem:regularization}
  A natural question arises as to whether the regularisation procedure
  proposed above has an impact on the probability distributions of the microscopic
  states. With the regularization, we amend the lifted realizations in two ways.
  Firstly, we remove identical realizations from the constraint matrix and we assign
  to the corresponding weight a higher target (the vector
  $\vect{g}$ in Equation~\eqref{eq:minProb} contains the cardinality of the
  unique realisations in $\vect{B}'$): with this procedure we do not alter the
  underlying probability distribution of the microscopic states, in that realisations
  that have been removed will have a correspondingly higher weight. Secondly, we add
  artificial realisations, which in principle create a bias in the underlying
  microscopic distribution: for this bias not to affect the outcome of our
  computations, the associated weights should be vanishingly small, hence we
  prescribe for them a target equal to $0$ (again via the vector $\vect{g}$) and we
  expect that these weights tend to $0$ as $M \to \infty$.
\end{rem}

\begin{rem}[Numerical solution of the minimization problem]
  We solve~\eqref{eq:minProb}--\eqref{eq:constr2} using a single Cholesky
  factorization~\cite{Madsen2004y}. Algorithms based on the Conjugate Gradient
  Method can also be employed for large equality-constrained quadratic
  problems~\cite{Gould2001a}. In our computation, we do not explicitly
  require~\eqref{eq:constr3}: positivity of the weights is assessed in a
  post-processing step, and used to determine whether enough realizations were
  taken (we increase $M$ until all weights are positive). It is also possible
  (albeit more expensive) to include the inequality constraints~\eqref{eq:constr3}
  and use iterative methods to solve the minimization
  problem~\cite{Madsen2004y}.
\end{rem}
 
\subsection{Variance-reduced Jacobian-vector products}\label{sec:ef-Jv}

Let us now discuss the Jacobian-vector multiplication that was introduced
in~\eqref{eq:ef-principle-jacFD}. As indicated before, a problem with using
Equation~\eqref{eq:ef-principle-jacFD} directly is the presence of numerical
noise, which should be addressed by using the same random numbers in both the
unperturbed and perturbed simulations. To achieve this, we use the same realizations,
microscopic parameters and random time paths in both the perturbed and unperturbed
coarse time-stepper; the only difference is in the computation of the weights. For
the perturbed coarse time-stepper, we replace the constrained optimization problem
for the weights by 
\begin{align}
&\tilde{\vect{w}}_\varepsilon= \arg\min \frac{1}{2} \sum_{m=1}^{\tilde M + M''}
\left(\tilde{w}_\varepsilon^m - \frac{M}{M'}g^m\right)^2,
\label{eq:minProb_pert} \\ 
&\frac{1}{M} \sum_{m=1}^{\tilde M + M''}
\tilde{w}^m_\varepsilon\vect{u}^m=\vect{U}+\varepsilon\vect{V},
\label{eq:constr1_pert}\\ 
& \frac{1}{M}\sum_{m=1}^{\tilde M + M''} \tilde{w}^m_\varepsilon=1,\label{eq:constr2_pert}\\
& \tilde{w}^m_\varepsilon \geq 0, \qquad 1\le m \le M, \label{eq:constr3_pert}
\end{align} 
Note that only the constraint~\eqref{eq:constr1_pert} has changed with
respect to the unperturbed optimization problem (see
equation~\eqref{eq:constr1}). Since the solution of the optimization problem
depends continuously and differentiably on the right-hand side
of the constraints, small perturbations on the right-hand side
of~\eqref{eq:constr1_pert} lead to small perturbations in weights.
Furthermore, since we are using the same microscopic realizations $\vect{u}^m$
in the constraints of the perturbed and unperturbed minimization problems, we
have effectively imposed $\vect{\omega}_1=\vect{\omega}_2$ in the-finite
difference formula~\eqref{eq:ef-principle-jacFD}, hence the variance of
$D\vect{\Phi}(\vect{\omega}_1)(\vect{U}) \vect{V}$ is bounded and of
$\mathcal{O}(1/M)$.

\begin{figure}
  \centering
  \includegraphics{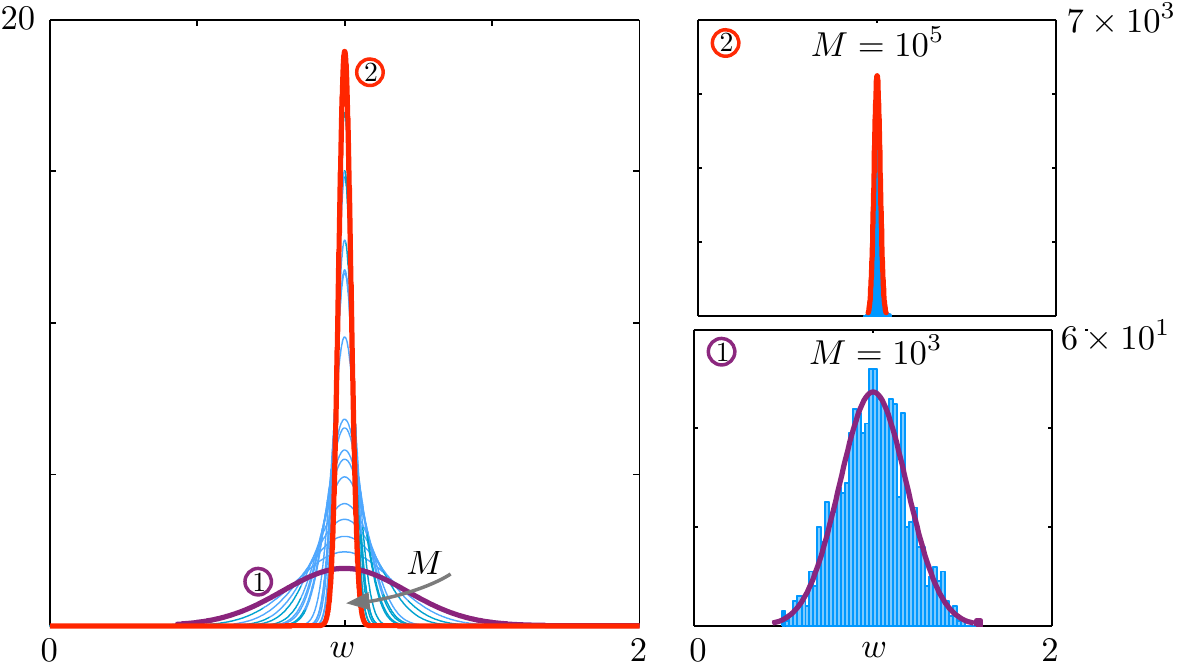}
  \caption{Distribution of weights as a function of the total number of
  realizations
  $M$. Weights are obtained by lifting a mixed state and solving the corresponding
  minimization problem for values of $M$ ranging between $10^3$ (curve $1$, magenta)
  and $10^5$ (curve $2$, red). The resulting data is fitted to a Gaussian
  distribution. As $M$ increases, the weights are sharply distributed around $1$.
  Parameters as in E1 in Table~\ref{tab:params}.}
  \label{fig:wVsM}
\end{figure}
\begin{figure}
  \centering
  \includegraphics{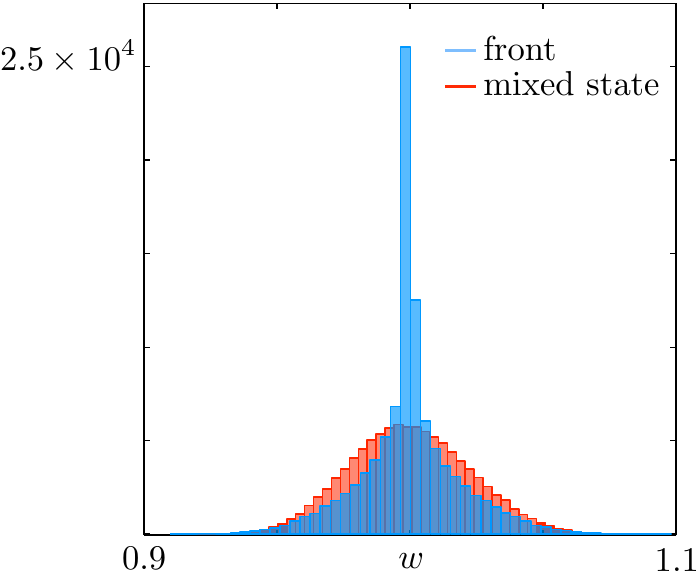}
  \caption{Distribution of weights with $M=10^5$ realizations, for two different
  macroscopic steady states: a mixed state (red) and a front (blue). Both
  distributions are sharply peaked around one, albeit the distribution for fronts is
  not Gaussian. Parameters as in E1 (mixed state) and E3 (front) in
  Table~\ref{tab:params}.}
  \label{fig:wFVsMS}
\end{figure}

In the limit of infinitely many realizations (where all weights converge to
$1$), the presented procedure converges to the exact Jacobian-vector product.
For finite values of $M$, there will be noise in the Jacobian-vector product as
a result of the random selection of a subset of all possible realizations. The
procedure only prevents noise blowup that would arise if a different selection
of realizations were considered for the perturbed and unperturbed coarse
time-step.  

\section{Numerical properties of the equation-free method}
\label{sec:numericalProperties}

In this section we show a series of numerical tests that highlight the numerical
properties of the weighted lifting and lead to an appropriate calibration of the
Newton-GMRES solver. For our tests we used a population of either $40$ or $400$
agents, a number of realizations varying between $10^3$ and $10^5$ and different
types of macroscopic steady states. Here and henceforth we will denote by
\textit{locked-in states} homogeneous macroscopic states with $U_n \approx 0$ or $U_n
\approx 1$ for all $n$, by \textit{mixed states} solutions with $U_n \approx 0.5$ for
all $n$, and by \textit{fronts} solutions that connect two locally locked-in states.
For these solutions, which were previously found via direct numerical simulations
in Figures~\ref{fig:homStates}--\ref{fig:inhomStates}, we use parameters of
E$1$--E$3$ in Table~\ref{tab:params}.  Note that, when computing fronts, we
effectively restrict our computations to one-dimensional lattices (which develop along the $x$
direction), and discard the $y$-coordinate of the lattice. We
stress that the numerical procedure presented here is unchanged in the case of
two-dimensional patterns. Unless otherwise stated, we set a time horizon
$T=20$ for the coarse time stepper.

\subsection{Convergence of the weights}

In our first numerical experiment, we fix $N=40$, lift macroscopic steady states
with the weighted operator $\L_\textrm{w}$ and plot the weight distribution as a
function of the number of realizations $M$. By construction (see
Section~\ref{sec:ef-lift}), we expect weights to be sharply distributed
around $1$ as $M$ tends to infinity. In Figure~\ref{fig:wVsM}, we lift a mixed
state for various $M$ and observe that the weight distribution is well
approximated by a Gaussian and tends to a Dirac distribution as $M \to \infty$.

Similar results are also obtained (not shown) for locked-in states and fronts.
However, we note that weights distributions associated with these states are not
necessarily Gaussian, as shown in Figure~\ref{fig:wFVsMS}. We point out that
for such macroscopic states, many weights are assigned a goal equal to $0$,
according to Equation~\eqref{eq:goalZero}. It is not surprising that the
distributions for these states, for which $M'' \gg 1$, are different to the ones
associated with a mixed state, for which $M'' \approx 0$ (see also
Remark~\ref{rem:regularization}). 

\subsection{Convergence of the Jacobian-vector product}
\begin{figure}
  \centering
  \includegraphics{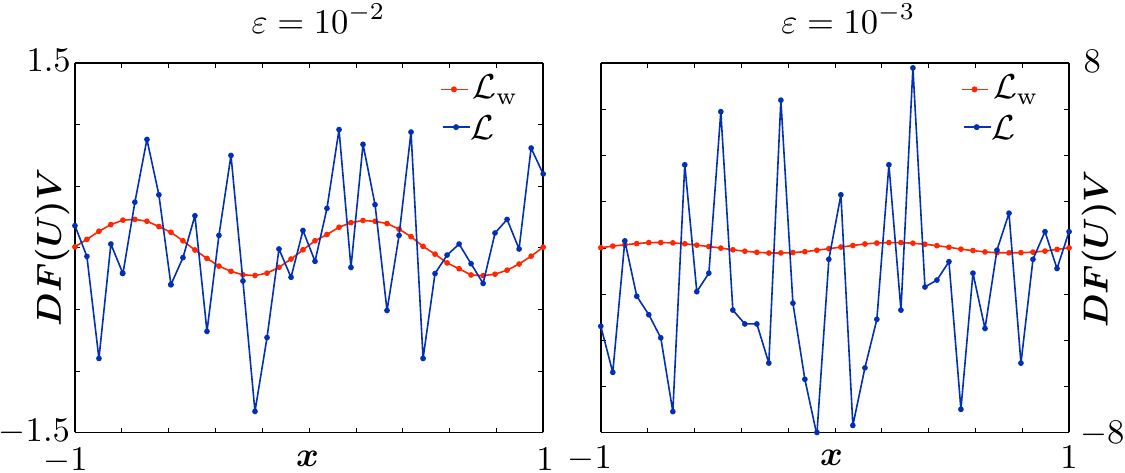}
  \caption{Jacobian-vector product $\vect{D}\vect{F}(\vect{U})\vect{V}$ of
  $\vect{F}(\vect{U}) = \vect{U} - \vect{\Phi}^M_T(\vect{U})$. We use a macroscopic front
  solution $\vect{U}$ and a sinusoidal perturbation $\vect{V}$ with $\Vert \vect{V}
  \Vert_2 = 1$. The unweighted lifting produces inaccurate Jacobian-vector product
  evaluations, whose norm becomes unbounded as we decrease the relative size of the
  perturbation $\eps$. On the other hand, weighted lifting preserves the
  structure of the perturbation. For this experiment we used $N=40$, $M=10^4$
  whereas all other parameters are chosen as in E3 of Table~\ref{tab:params}.}
  \label{fig:jacVecProdProf}
\end{figure}

We test the numerical properties of weighted Jacobian-vector products with a
second numerical experiment. We select a region of parameter space in which a
stable macroscopic front $\vect{U}$ is observed (corresponding to E3 in
Table~\ref{tab:params}) and compute a single evaluation of the Jacobian-vector
product $\vect{D}\vect{F}(\vect{U})\vect{V}$, where
$\vect{F}$ is given by~\eqref{eq:ef-principle-newton}, $\vect{D}\vect{\Phi}$ is
estimated by~\eqref{eq:ef-principle-jacFD} and $\vect{V}$ has unit norm and a
sinusoidal profile 
in $x$.
If $\vect{\Phi}(\vect{U}+\epsilon \vect{V})$ and $\vect{\Phi}(\vect{U})$ are
calculated using two independent function evaluations, the Jacobian-vector
product is severely affected by noise and completely loses the 
structure of the perturbation $\vect{V}$ (blue lines in
Figure~\ref{fig:jacVecProdProf}). Furthermore, this effect is greatly amplified
as we decrease $\varepsilon$, as the Jacobian-vector product becomes unbounded.

\begin{figure}
  \centering
  \includegraphics{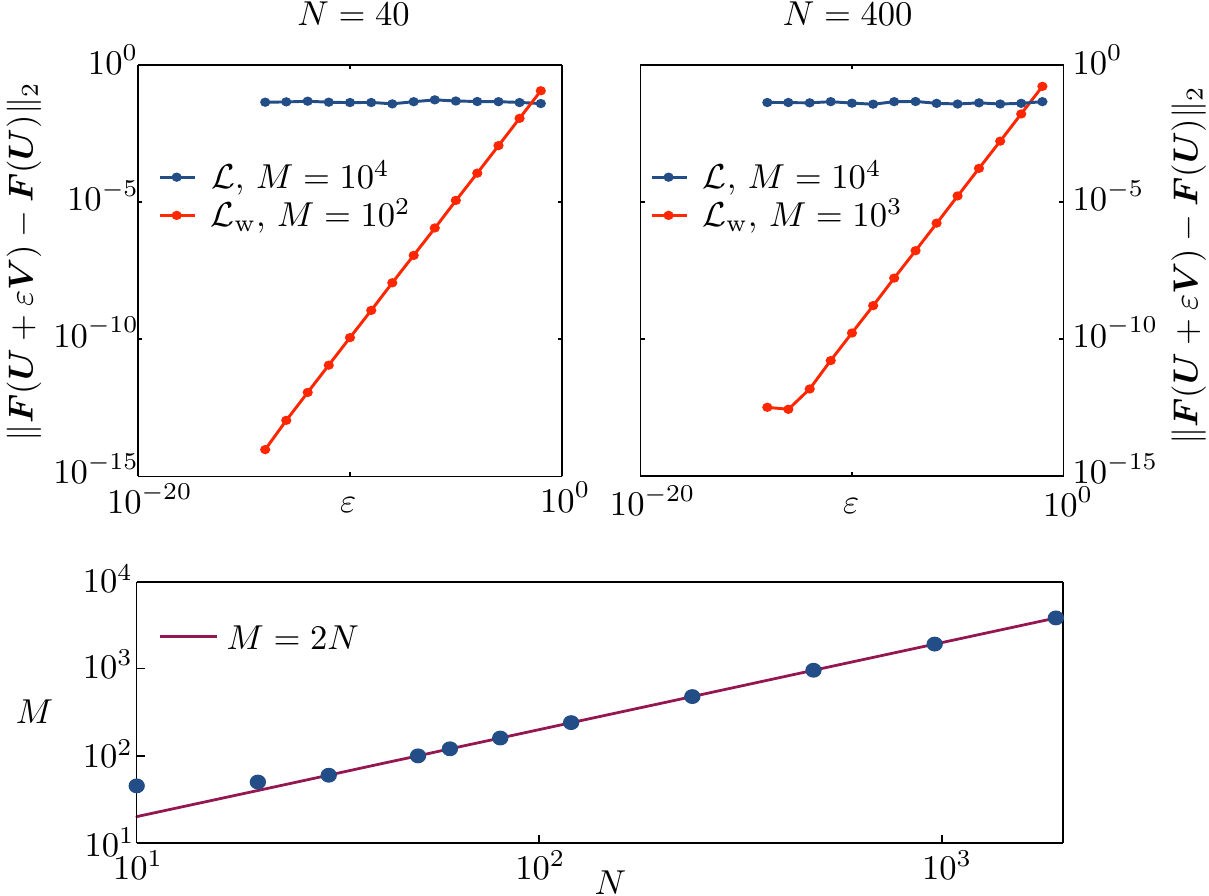}
  \caption{
  For a front $\vect{U}$ and a unit-norm random vector $\vect{V}$, we
  compute the 2-norm of $\vect{F}(\vect{U}+\eps\vect{V}) -\vect{F}(\vect{U})$
  as a function of $\varepsilon$, using weighted and unweighted lifting
  operators, for $\vect{F}(\vect{U}) = \vect{U} - \Phi^M_T(\vect{U})$. Top-left: if
  we set $N=40$ and use $10^4$ unweighted realizations, noise affects the
  evaluation of the Jacobian action, whereas $10^2$ weighted realizations are
  sufficient to obtain an $\mathcal{O}(\varepsilon)$ curve. Top-right: the
  experiment is repeated for $N=400$. 
  Bottom: we repeat the computations in the top
  panels for various values of $N$ and show the number of weighted realisations
  which consistently give an $\mathcal{O}(\varepsilon)$ curve in the Jacobian
  evaluation. The experiment shows that weights become effective with
  $M=\mathcal{O}(N)$ realizations.
  }
  \label{fig:jacVecProd}
\end{figure}

On the other hand, using weighted operators and the variance-reduced Jacobian-vector
product outlined in Section~\ref{sec:ef-Jv}, we maintain the 
structure of the perturbation and the Jacobian-vector product varies smoothly as a
function of   $x$. A further confirmation is found in Figure~\ref{fig:jacVecProd}, where
we plot the $2$-norm of $\vect{F}(\vect{U}+\eps\vect{V}) -\vect{F}(\vect{U})$ as a
function of $\eps$. In particular, we seek the minimum number of
realizations required to obtain smooth Jacobian evaluations, that is, an
$\mathcal{O}(\eps)$ curve: if $N=40$, then $100$ weighted realizations
are sufficient to obtain a smooth Jacobian evaluation, whereas $10000$
unweighted realizations are still affected by noise (left panel of
Figure~\ref{fig:jacVecProd}). If we increase the system size to $N=400$, then
$1000$ weighted realizations are sufficient to observe an $\mathcal{O}(\eps)$
curve. The experiments in Figure~\ref{fig:jacVecProd} show that weights are effective
with $M = \mathcal{O}(N)$ realizations.
\begin{figure}
  \centering
  \includegraphics{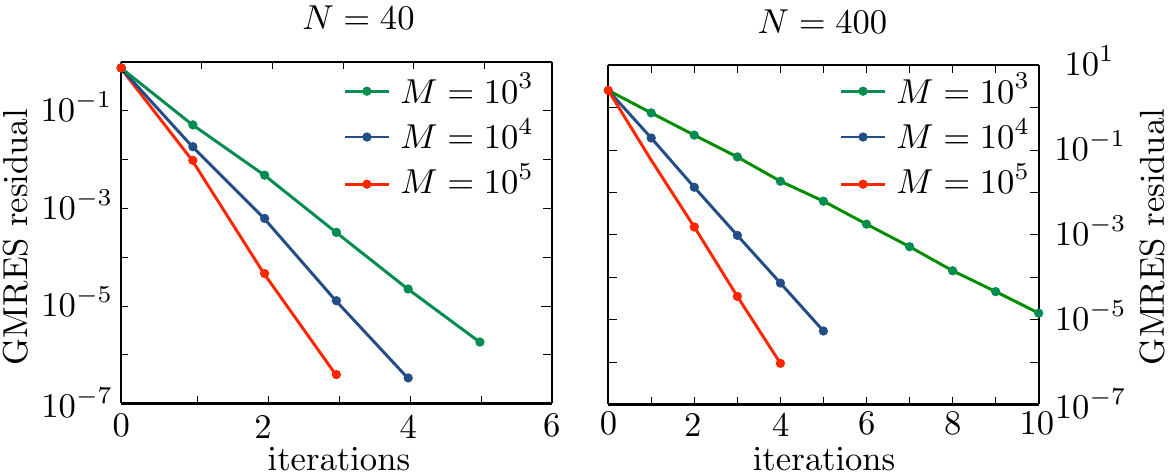}
  \caption{Convergence of GMRES iterations for the problem 
  $ \vect{D}\vect{F}(\vect{U}) \vect{V} = -\vect{F}(\vect{U})$, where 
  $\vect{U}$ is a macroscopic front and $\vect{D}\vect{F}(\vect{U}) = \vect{I} -
  \vect{D}\vect{\Phi}^M_T(\vect{U})$, for different numbers of agents $N$ and numbers of
  realizations $M$. Jacobian actions are computed with weighted lifting and
  $\varepsilon = 10^{-5}$.}
  \label{fig:GMRESRes}
\end{figure}
\subsection{Convergence of GMRES iterations}
\label{subsec:GMRESConvergence}
The next step towards the construction of our Newton-Krylov solver is the solution of
the linear system associated with the Jacobian $\vect{D}\vect{F}$ of
$\vect{F}$. We use GMRES to solve iteratively the system
$ \vect{D}\vect{F}(\vect{U}) \vect{V} = -\vect{F}(\vect{U}) $
where $\vect{U}$ is a mixed state and $\vect{D}\vect{F}$ is computed using weighted
operators and variance-reduced Jacobian-vector products. In Figure~\ref{fig:GMRESRes}
we show convergence plots for the GMRES solver for various numbers of
realizations and
system sizes. In our computations we choose $\eps = 10^{-5}$ for the
finite-difference approximation of the Jacobian, set model parameters as in E3
in Table~\ref{tab:params} and employ the in-built Matlab function \texttt{gmres} with
$\texttt{restart}=20$, $\texttt{tol}=10^{-5}$, $\texttt{maxit}=20$. As we can see,
the linear problems are well behaved. As expected, the linear iterations necessary to
obtain convergence decrease as we increase the number of realizations, but increase
with the system size. 

\subsection{Convergence of Newton-GMRES}
\begin{figure}
  \centering
  \includegraphics{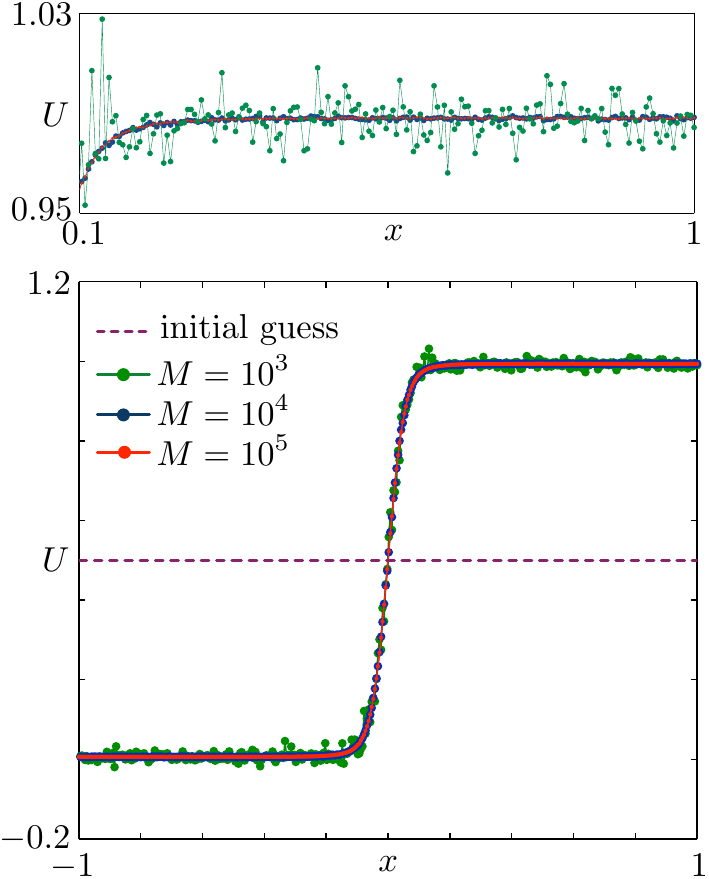}
  \caption{Macroscopic fronts computed with the Newton-GMRES solver (see also
  Figures~\ref{fig:NGMRESConv}(c) and~\ref{fig:NGMRESConv}(d)). The initial guess for
  all cases is a mixed state $U(x)\equiv 0.5$. The inset on top shows that the noise
  in the macroscopic profile is controlled by increasing the number of
  realizations.}
  \label{fig:NGMRESSol}
\end{figure}
\begin{figure}
  \centering
  \includegraphics{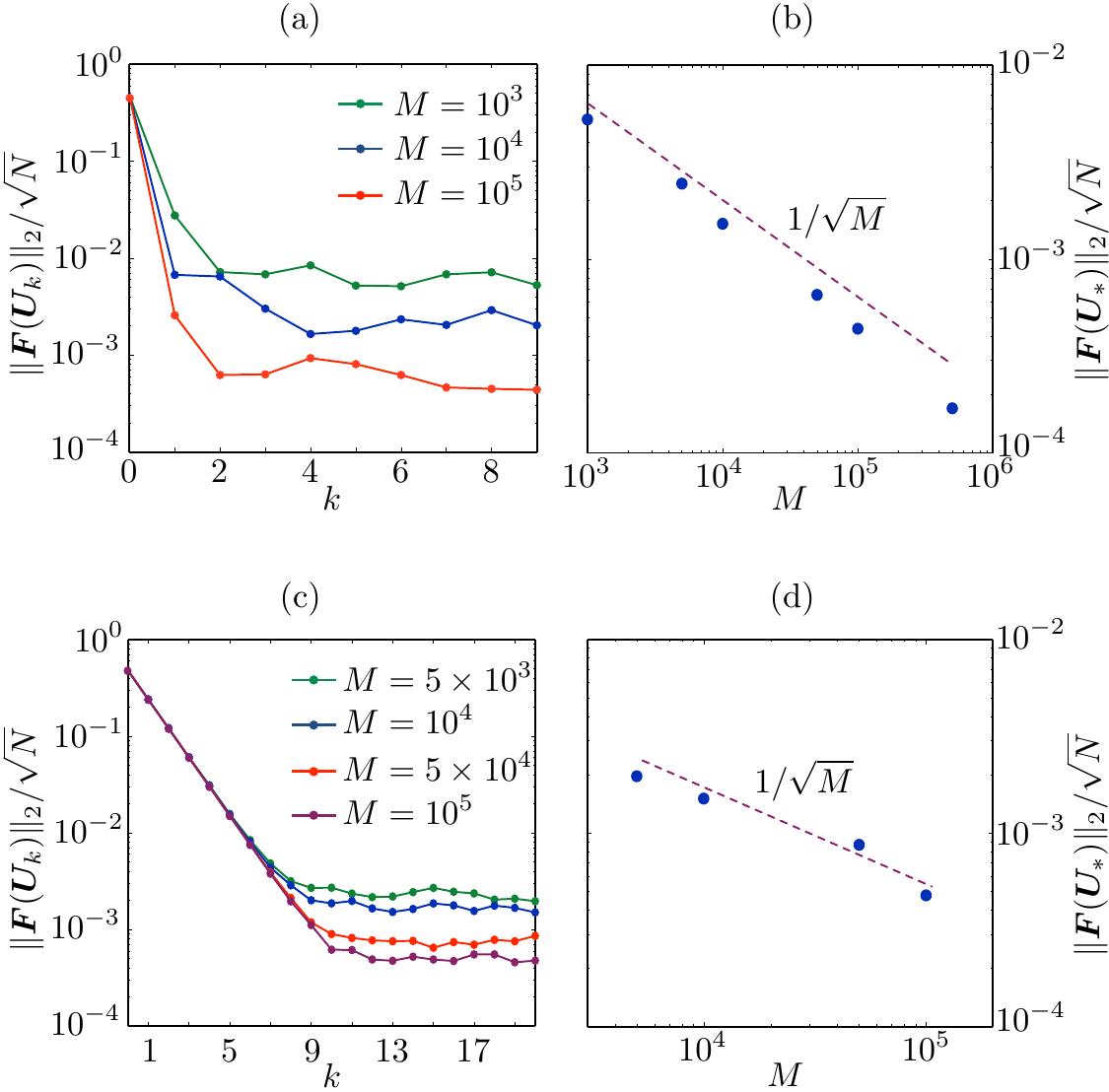}
  \caption{Convergence of the Newton-GMRES solver to compute a macroscopic front
  solution. The initial guess is a mixed state $U(x) \equiv 0.5$ and the Newton steps
  are damped with a constant factor $0.5$. Parameters as in E1
  of Table~\eqref{tab:params}, with the exception of $\alpha = 0.7$. Panel
  (a): convergence for various values of the number of realizations $M$, for $N=40$.
  Panel (b): the best achieved tolerance in the experiment of panel (a) is an
  $\mathcal{O}(1/\sqrt{M})$. Panels (c) and (d): the computations of panels (a)
  and (b) are repeated with $N=400$. Solution profiles for panels (c) and (d)
  are shown in Figure~\ref{fig:NGMRESSol}.}
  \label{fig:NGMRESConv}
\end{figure}
We now proceed to the computation of a single front, for fixed values of the control
parameters, using Newton-GMRES method. We start our iterations
with a mixed state, $U_n = 0.5$ for all $n$, and converge to a front, whose profile for
various values of the number of realizations is shown in Figure~\ref{fig:NGMRESSol}.
In panels $a$ and $c$ of Figure~\ref{fig:NGMRESConv} we show convergence plots of the
Newton-GMRES solver for $N=40$ and $N=400$, with various numbers of realizations. 
In these plots we scale the residual by $\sqrt{N}$, so as to compare performances
with varying system sizes. The Newton steps are built around the linear solves
described in Section~\ref{subsec:GMRESConvergence} and each Newton update is damped
by a constant factor $0.5$.

In the low-dimensional case, $N=40$, the solver achieves convergence in less
than $4$ iterations and then residuals plateau and begin to oscillate, as expected
(panel a of Figure~\ref{fig:NGMRESConv}). The onset of these oscillations is an
indication of the best tolerance that we can achieve with the nonlinear solver for a
fixed number of realizations: such tolerance is of $\mathcal{O}(1/\sqrt{M})$, as is
shown in panel b. In the high-dimensional case, $N=400$, a similar scenario occurs,
albeit more iterations are needed to achieve convergence. 
We point out that the experiment of Figure~\ref{fig:NGMRESConv} represents a severe
test for the nonlinear solver, in that we have chosen a poor initial guess (we start
from a mixed state to obtain a front). During continuation, initial guesses are
provided by a tangent or secant predictor step, resulting in much faster
convergence. We also remark that the convergence of the nonlinear solver is linear,
as expected, since we are using damped Newton updates.

\section{Bifurcation study of vendor lock-in model}
\label{sec:bifurcationResults}
In this section we present the results of coarse-grained numerical bifurcation
analyses of the lock-in model. The bifurcation diagrams have been computed with a
simple natural continuation method, that is, we start from a known solution
to the steady-state problem~\eqref{eq:ef-principle-newton}, increment the
continuation parameter and solve a new problem using the previous solution as an initial
guess. Even though this is not an optimal continuation strategy (as it does not allow
one to go past folds with a single run), we employ it here mainly for its simplicity,
keeping in mind that pseudo-arclength continuation with tangent or secant prediction
steps can easily be implemented.

\subsection{Continuation of homogeneous steady states}\label{sec:bif-homo}
We compute branches of homogeneous states using weighted lifting operators for a
population of $N=400$ agents. Since $N$ is large, we can compare our results with
branches of fixed points of the approximate evolution map~\eqref{eq:coarse1DMap}. In
Figure~\ref{fig:pitchforkHomogeneousComparison} we compare a few branches noting that
discrepancies are due to the finite size of the system. We also point out that each
point on the branch is the solution of a $400$-dimensional coarse system: in
principle we could have tracked the solution of a simple one-dimensional coarse system,
since the solutions we are finding are homogeneous; however, this experiment provides
a benchmark for our method and prepares us for the 
continuation of fronts.

\begin{figure}
  \centering
  \includegraphics{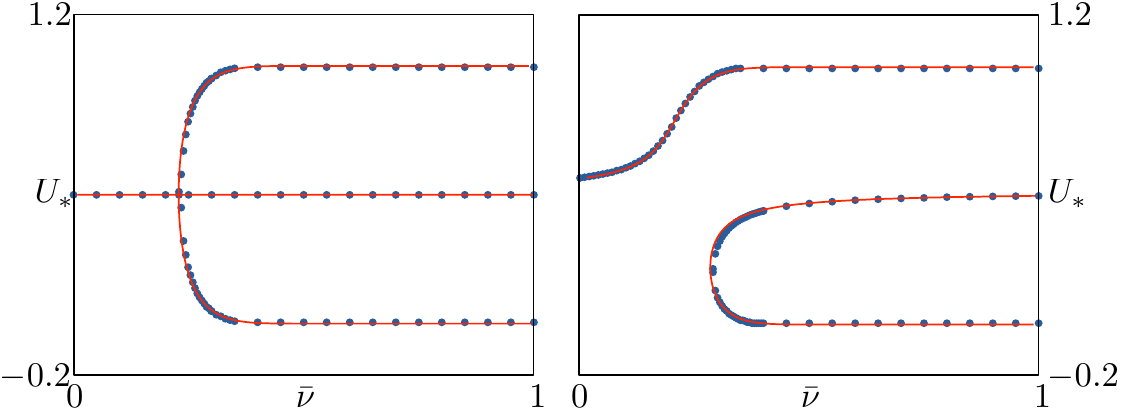}
  \caption{Coarse bifurcation diagram for homogeneous states. With blue dots we
  indicate the result of the equation-free continuation using weighted lifting
  operators. Red lines correspond to branches of fixed points of the approximate
  analytic evolution map~\eqref{eq:coarse1DMap}, as in
  Figure~\ref{fig:analyticContinuation}.
  The bifurcation parameter is the average $\anu$ of quality perception $q_n$, as
  reported in Equation~\eqref{eq:parMoments}. Other parameters: $\amu=0$ (left) and
  $\amu=0.04$ (right), $\Delta \mu=0$, $\alpha=0$, $\axi=0.236$, $\azeta=0$, $\beta = 10^8$, $N=400$.
  For the equation-free computations we use $M=10^4$ realizations and solve a
  $400$-dimensional coarse system.}
  \label{fig:pitchforkHomogeneousComparison}
\end{figure}

\begin{figure}
  \centering
  \includegraphics{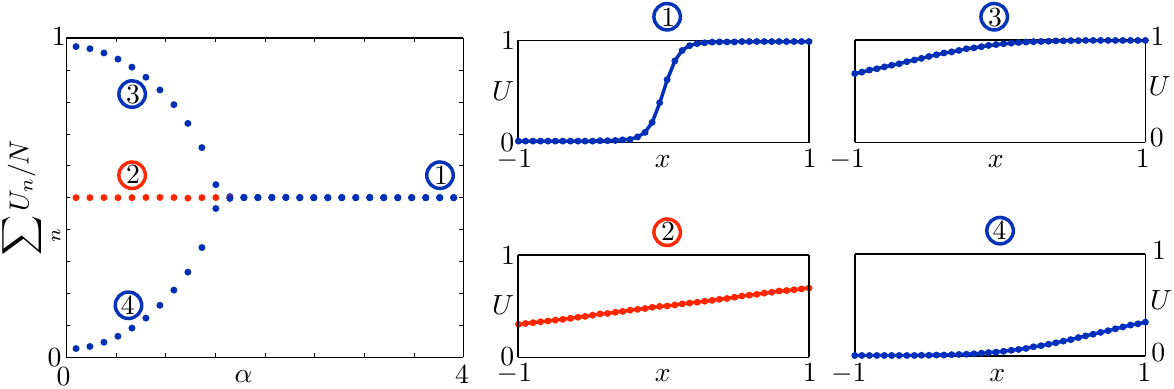}
  \caption{Coarse bifurcation diagram for fronts. The bifurcation parameter is the
  slope $\alpha$ of the profile of the average quality perception $q_n$, as reported
  in Equation~\eqref{eq:parMoments}. For large $\alpha$ a stable macroscopic front is
  formed (solution 1). As $\alpha$ is decreased, the front destabilises at a
  symmetry-breaking instability, generating partially locked-in states (solutions 3
  and 4). A fixed continuation step of $\Delta \alpha = 0.14$ has been used in the
  computations; other parameters as in E3 of Table~\ref{tab:params}.}
  \label{fig:pitchforkFrontCoarse}
\end{figure}

\subsection{Continuation of fronts as a function of $\alpha$}
In Figure~\ref{fig:pitchforkFrontCoarse} we show a coarse bifurcation diagram of
macroscopic fronts for a one-dimensional lattice with $N=40$ agents. We recall here
that fronts are observed in the inhomogeneous lock-in model, for which 
\[
q_n \sim \mathcal{N}(\bar \mu(x_n), \bar \xi), \qquad \mu(x_n) = \bar \mu + \Delta \mu
\tanh( \alpha x_n ).
\]
We choose parameters as in E3 of Table~\ref{tab:params}, with the exception of the
slope of the sigmoid, $\alpha$, which is the continuation parameter. The computations
are performed with $5\times 10^4$ realizations, using a linear tolerance of
$10^{-3}$, variance-reduced Jacobian-vector products with $\eps = 10^{-5}$, a relative
nonlinear tolerance of $2 \times 10^{-3}$ and continuation steps $\Delta \alpha =
0.14$.
As we decrease $\alpha$, the stable front (labelled $1$) loses stability at a
symmetry-breaking bifurcation, giving rise to two partially locked-in states
(labelled $3$ and $4$). These solutions correspond to the ones found via direct
numerical simulations (see coarse profiles in Figure~\ref{fig:inhomStates}). 
As we increase $\alpha$, the stable front (labelled $1$) becomes steeper: owing to
our particular choice of $\mu(x)$, the limit of large $\alpha$
corresponds to two competing radical factions of the same size, and so in this limit
the distribution of the average choice approaches a step function.

\begin{figure}
  \centering
  \includegraphics{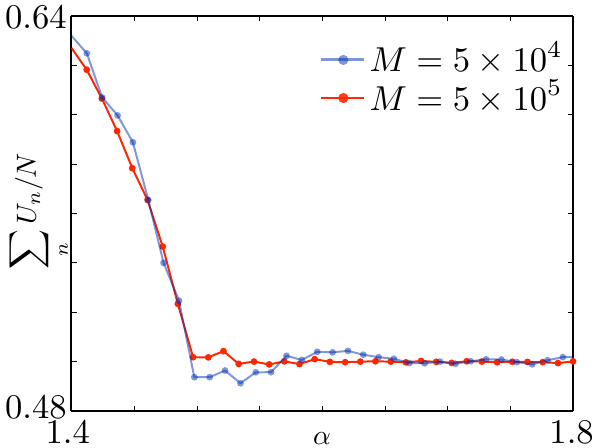}
  \caption{Effect of the number of realizations $M$ on the numerical
  continuation step size. We compute one of the branches in
  Figure~\ref{fig:pitchforkFrontCoarse} in the proximity of the
  symmetry-breaking bifurcation, with a much smaller continuation step, $\Delta
  \alpha = 0.012$. The continuation with $5 \times 10^{4}$ realizations and nonlinear
  tolerance $2 \times 10^{-3}$ (blue) is now affected by noise, which is reduced by
  setting $M = 5 \times 10^5$ and a tolerance of $7 \times 10^{-4}$ (red).}
  \label{fig:pitchforkFrontZoomIn}
\end{figure}

As expected, the number of realizations influences the continuation step size: in
Figure~\ref{fig:pitchforkFrontZoomIn} we re-compute one of the branches in
Figure~\ref{fig:pitchforkFrontCoarse} in the proximity of the
symmetry-breaking bifurcation, with a much smaller continuation step, $\Delta
\alpha = 0.012$. The continuation with $M = 5 \times 10^{4}$ and a relative
nonlinear tolerance of $2 \times 10^{-3}$ (blue curve) is now affected by noise,
which can be reduced by increasing the number of realizations to $5 \times 10^5$ and
set a tolerance of $7 \times 10^{-4}$ (red curve).

During continuation, we infer stability of a coarse solution $\vect{U}_\ast$ by
computing eigenvalues of $\vect{D}\vect{F}(\vect{U}_\ast) = \vect{I} -
\vect{D}\vect{\Phi}^M_T(\vect{U}_\ast)$. Since both $N$ and $M$ are relatively small,
we form $\vect{D}\vect{F}(\vect{U}_\ast)$ using the finite difference
approximation~\eqref{eq:ef-principle-jacFD} $N$ times and then compute the
full spectrum at once. For larger system sizes, matrix-free Arnoldi iterations can be
employed to compute only the leading eigenvalues. In Figure~\ref{fig:frontEigs}
we plot the most unstable eigenvalue as a function of the bifurcation parameter,
showing that the symmetry-breaking instability occurs at $\alpha \approx 1.5$.
As a further remark on the accuracy of the variance-reduced Jacobian
calculations, we plot the full spectrum for selected values of the continuation
parameter, showing a clear separation between the leading real eigenvalue and a
tight cluster of eigenvalues at the origin.

\begin{figure}
  \centering
  \includegraphics{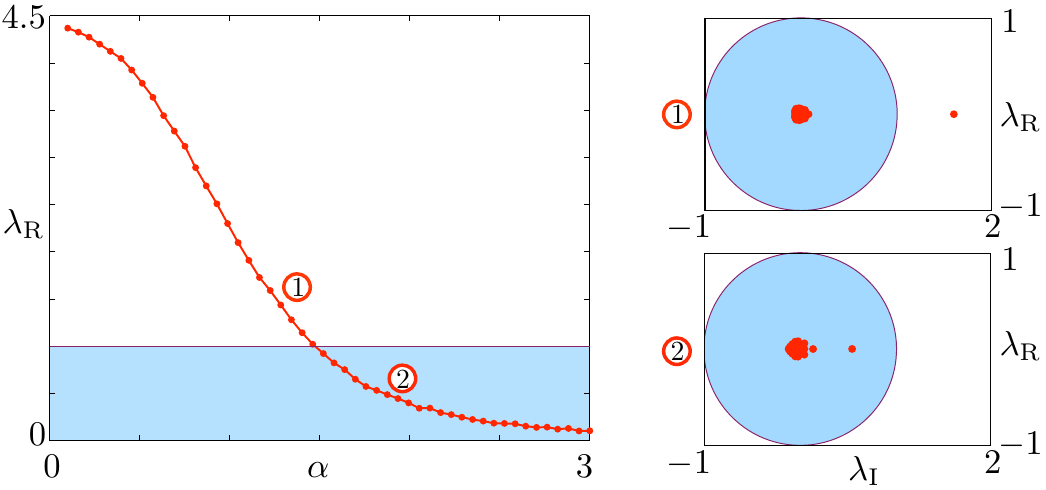}
  \caption{Eigenvalues of macroscopic front solutions on the symmetric branch of
  the bifurcation diagram in Figure~\ref{fig:pitchforkFrontCoarse}. Left: the real
  part of the most unstable eigenvalue is plotted as a function of the
  continuation parameter $\alpha$. Right: representative spectra along the
  branch.}
  \label{fig:frontEigs}
\end{figure}

\section{Conclusion}\label{sec:conclusions}
In this paper we have computed and continued in parameter space coarse-grained
states for an ABM of consumer lock-in with heterogeneous agents. We first
considered the simple case of homogeneous agents and found an explicit macroscopic
evolution map for the expectation of the 
mean purchase. 
As expected, this coarse
description leads to a scenario similar to the one found for linearly-coupled
oscillators subject to a double-well potential~\cite{Barkley2006y}: the first moment
map stabilises metastable locked-in states, which arise at a pitchfork bifurcation of
the coarse map; introducing a homogeneous preference for one of the two products has
the effect of breaking the pitchfork or, in the heat bath analogy, to introduce an
asymmetric double-well potential.

The more interesting and challenging case of agents split into factions with opposite
preferences leads to the formation of 
monotonically-increasing
macroscopic patterns, which have been computed using a 
large-dimensional
coarse
description. Our analysis reveals that, as the inhomogeneity becomes less pronounced,
fronts undergo a symmetry-breaking instability. The resulting stable patterns are not
fully locked-in, in that they feature pockets of resistance of each faction. An
interesting future extension of this model could include a more granular
modulation in the agents' preferences with nearest neighbour
coupling: in this case the lock-in model would be posed on a spatially-extended
lattice and oscillatory one-dimensional inhomogeneities or two-dimensional randomly
distributed factions 
can be studied with the method presented
here. In particular, the formulation of the two-dimensional coarse
problem would follow the same steps as the one-dimensional case, even though it
would naturally lead to a larger coarse system size. A further
extension could include agent motility: in this way it would be possible to
study how coherent spatio-temporal states, such as travelling fronts or bumps, are
related to the spatial heterogeneity in agent preferences.

The core result of the paper is a strategy to evaluate variance-reduced
Jacobian-vector products in equation-free methods. The main idea behind our approach
is to exploit the non-uniqueness of the lifting operator to obtain a coarse
time-stepper which depends smoothly on the coarse variables. 
In practice, this is achieved by using weighted averages in the restriction step and
pre-computing weights during the lifting step. We have shown that a direct
consequence of using weights is that we gain full control over the linear solves,
leading to well-behaved GMRES iterations and, ultimately, to nonlinear convergence
for large-dimensional coarse descriptions.

In order to assess the efficiency of the weights, we draw a comparison between
weighted and unweighted Newton steps when the number of realizations $M$ is fixed.
In the unweighted case, each Newton step requires $1$ evaluation of $\Phi^T_M$ (1
Bernoulli sampling, $M$ evolutions, 1 average) and then, for each GMRES step, a
further evaluation of $\Phi^T_M$ involving $M$ further evolutions. In the
weighted case, each Newton step requires $1$ weighted evaluation of $\Phi^T_M$
($1$ Bernoulli sampling, $M$ evolutions, $1$ manipulation of the constraint
matrix, $1$ linear solve, $1$ weighted average) and then, for each GMRES step,
$1$ linear solve and $1$ weighted average.  Considering the improved GMRES and
Newton-GMRES convergence, weighted operators seem more efficient, especially
when running evolution steps is expensive. 

We remark that, for the case under consideration, it was not possible to make a
quantitative comparison of the efficiencies of weighted and unweighted coarse
time-steppers, since the unweighted Newton-GMRES solver failed to converge for the
inhomogeneous case. This reinforces the idea that, in large-dimensional coarse
systems, noise can be harmful and variance-reduced Jacobian evaluations become an
important ingredient in equation-free methods.
Furthermore, weighted operators could be employed also in smaller coarse systems, such
as the ones deriving from Galerkin discretizations of spatially-extended
systems~\cite{Gear2002a} or from chemical systems of moderate
sizes~\cite{Hoyle2012a}.  

A natural question arises as to whether weighted operators are applicable to other
types of coarse-grained models. In Section~\ref{sec:eq-free} we have presented
weighted operators for the lock-in model, for which microscopic variables are binary
numbers, but we envisage that similar ideas will be relevant in models where the
microscopic variables are real numbers. In particular it seems plausible to assume
that the minimization problem~\eqref{eq:minProb}--\eqref{eq:constr3} will remain
valid if $\vect{u}^m \in \mathbb{R}^N$. 
Our current choice of the optimisation problem for weights was driven by the following
criteria: 
\begin{enumerate}
\item The weighted realisations should satisfy the restriction exactly;
\item The weights should converge to $1$ as $M\to\infty$;
\item The weights should depend continuously on the macroscopic state;
\item The weights should introduce minimal perturbations to the sampled probability distributions.
\end{enumerate}
Other procedures to determine the weights are conceivable: we could allow only a
limited number of weighting factors, or use a different norm in the target function.
When only allowing a limited number of weights, one clearly imposes additional
artefacts on the represented probability distribution of realisations. Neither the
resulting artefacts nor the effect of the choice of norm in the target function have
been systematically studied in this work.
These aspects, together with a more rigorous justification of weighted operators,
will be the subject of future work.

\section*{Acknowledgements}
DA acknowledges the University of Nottingham Research Development Fund,
supported by the Engineering and Physical Sciences Research Council (EPSRC). DA
and RH acknowledge partial funding from the EPSRC grant EP/H021779/1. 
The work of GS was partially supported by the Research Council of KU
Leuven through grant OT/13/66, by the Interuniversity Attraction Poles Programme
of the Belgian Science Policy Office under grant IUAP/V/22, and by the Research Foundation -- Flanders (FWO) through grant G.A003.13. We are grateful to 
Andrew Archer, Nigel Gilbert, David Lloyd, Alastair Rucklidge, Jan Sieber and
Anne Skeldon for stimulating discussions about this work.

\bibliographystyle{plain}
\bibliography{lockin}

\end{document}